\documentclass[conference]{IEEEtran}
\IEEEoverridecommandlockouts

\usepackage{amsmath,amssymb,amsthm,mathtools}
\usepackage{bm}
\usepackage[normalem]{ulem}
\usepackage{booktabs}
\usepackage{cite}
\usepackage{hyperref}
\hypersetup{hidelinks}
\usepackage{graphicx}
\usepackage{tikz}
\usetikzlibrary{arrows.meta,calc,positioning}

\newif\ifextendedversion
\ifdefined\ARXIVEXTENDED
  \extendedversiontrue
\else
  \extendedversionfalse
\fi

\newtheorem{theorem}{Theorem}

\newtheorem{corollary}{Corollary}
\theoremstyle{remark}

\newcommand{\Fq}{\mathbb{F}_q}
\newcommand{\Ftwo}{\mathbb{F}_2}
\newcommand{\mbC}{\mathbf{C}}

\newcommand{\Msmall}{\frac{1}{2N-1}}
\newcommand{\Meven}{\frac{2}{3N-2}}
\newcommand{\Mfull}{\frac{N-1}{N}}

\newcommand{\avgrank}{\widehat{\operatorname{rank}}}

\renewcommand{\Im}{\mathrm{Im}}
\allowdisplaybreaks[1]
\begin{document}
\raggedbottom
\setlength{\textfloatsep}{12pt plus 2pt minus 2pt}

\title{Cache-Aided Linear Function Computation}

\author{\IEEEauthorblockN{Xiaohong Cai, Ruze Zhang, Laigang Guo, and Raymond W. Yeung}
\thanks{Xiaohong Cai and Raymond W. Yeung are with the Department of Information Engineering, The Chinese University of Hong Kong, Hong Kong SAR, China (e-mail: cx021@ie.cuhk.edu.hk; whyeung@ie.cuhk.edu.hk).}
\thanks{Ruze Zhang is with the School of Mathematical Sciences and the Key Laboratory of Pure Mathematics and Combinatorics (LPMC), Nankai University, Tianjin 300071, China (e-mail: rzzhang@nankai.edu.cn).}
\thanks{Laigang Guo is with the Laboratory of Mathematics and Complex Systems (Ministry of Education), School of Mathematical Sciences, Beijing Normal University, Beijing 100875, China (e-mail: lgguo@bnu.edu.cn).}
}

\maketitle

\begin{abstract}
We study cache-aided linear computation in distributed systems where each source generates and stores its own data. An external receiver requests a linear combination of these data, and the sources use information cached before the request is known to reduce the communication needed to recover the requested result. In our model, each of the $N$ sources stores its own independent file over a finite field and has an additional cache of normalized size $M$. The receiver must recover any requested linear combination with zero error from the messages sent by the sources over separate noiseless links. We establish a lower bound on the worst-case communication rate over any finite field by analyzing several requests under the same cache placement. This bound determines the optimal rate when no additional cache is available. For binary files, we propose parity-aided placement (PAP), a coded-placement scheme that combines cached parities with uncoded symbols. For every $N\geq3$, we characterize its communication rate for each request and optimize its placement parameters to obtain a closed-form worst-case rate. The resulting achievable rate meets the lower bound for $0\leq M\leq1/(2N-1)$ and $M\geq(N-1)/N$, establishing the optimal memory--rate tradeoff in these two ranges.
\end{abstract}

\begin{IEEEkeywords}
Cache-aided computation, coded placement, distributed computation, linear computation, memory--rate tradeoff
\end{IEEEkeywords}

\section{Introduction}

In distributed systems with locally generated data, different source nodes retain their own data, while an external receiver may request a function of the data held by multiple sources. For example, Earth observation satellites can share observation data for processing in orbit~\cite{ValenteEramoLavacca2023}, and multiple aerial sensing nodes can send their observations to a ground station for data fusion through an unmanned aerial vehicle (UAV) relay~\cite{XingWangTengLi2022}. When the receiver needs only a function of the source data, a small result does not necessarily mean that little communication is required. The receiver may still need separate transmissions from many sources, whose total size can far exceed that of the result. Exchanging and caching information before the requested function is specified can change the information available at each source and reduce the communication needed for subsequent computation. The challenge is to select cache contents that support different future requests without knowing which function will be requested.

Caching can reduce the transmission load by making information available before requests arrive. 
In the classical coded-caching model, a server holds the entire file library, and receivers cache information to reduce the communication required for subsequent file requests over a shared link~\cite{MaddahAliNiesen2014,YuMaddahAliAvestimehr2018}. Subsequent work established bounds for general placement~\cite{YuMaddahAliAvestimehr2019} and developed coded placements that cache linear combinations of file segments~\cite{TianChen2018}. Cache-aided scalar and general linear function retrieval extend this setting to requests for linear functions of the library files, with the entire library still held by a central server and caches located at the receivers~\cite{WanSunJiTuninettiCaire2021,WanSunJiTuninettiCaire2021GLFR}.
Device-to-device caching enables users to retrieve files from one another's caches~\cite{JiCaireMolisch2016,YaparWanSchaeferCaire2019}. Coding can also reduce communication in distributed computing~\cite{LiMaddahAliYuAvestimehr2018,WanSunJiCaire2022}. 

Network function computation studies the recovery of functions of distributed source messages over a network. Previous work established cut-set bounds for network function computation~\cite{AppuswamyFranceschettiKaramchandaniZeger2011}, algebraic conditions for computing linear functions~\cite{AppuswamyFranceschetti2014}, and improved capacity bounds for general target functions and network topologies~\cite{GuangYeungYangLi2019}. For the binary arithmetic sum, the zero-error compression capacity has been characterized under different source-observation patterns at two encoders~\cite{GuangZhang2023}.

 We study how caching at the sources can reduce the communication required for linear function computation. In our model, each of the $N$ sources retains its own file and has an additional cache of normalized size $M$. The caching functions are fixed before the request is revealed and must support every linear combination requested by an external receiver. We seek to characterize the tradeoff between cache size and the worst-case total communication required for zero-error recovery. Our main results are as follows.

We establish a converse bound for an arbitrary finite field. For any $m\times N$ query matrix $A$, the bound depends on the average rank of its $m\times n$ column submatrices. Maximizing this quantity over $A$ for each pair $(m,n)$ gives
\begin{equation}
R(\Fq,N,M)\geq
\max\nolimits_{1\leq m,\,n\leq N}\frac{N}{mn}
\bigl(\avgrank_{m,n,N}-(N-n)M\bigr).
\label{eq:main-multi-query}
\end{equation}

For the binary case, we introduce \emph{parity-aided placement} (PAP), which combines cached parities with uncoded symbols. For every $N\geq3$, we derive its communication rate for each request support size and optimize the placement parameters to obtain a closed-form worst-case rate, asymptotically achievable for every $M$. In the small-cache regime, PAP meets the $m=n=1$ instance of~\eqref{eq:main-multi-query} and gives
\begin{equation}
R(\Ftwo,N,M)=N-N(N-1)M,
\quad 0\leq M\leq\frac{1}{2N-1}. 
\end{equation}
PAP also achieves the optimal rate $R(\Ftwo,N,M)=1$ for $M\geq(N-1)/N$, thereby determining the optimal memory--rate tradeoff in these two regimes.

\section{Model of Cache-Aided Linear Computation}

\begin{figure}[!t]
\centering
\resizebox{\columnwidth}{!}{%
\begin{tikzpicture}[
    >=latex,
    font=\footnotesize\rmfamily,
    every node/.style={font=\footnotesize\rmfamily},
    sourcefile/.style={fill=blue!7,rectangle,align=center,minimum width=2.20cm,minimum height=0.72cm,inner xsep=1pt,inner ysep=0pt},
    sourcecache/.style={fill=blue!14,rectangle,align=center,minimum width=2.20cm,minimum height=0.72cm,inner xsep=1pt,inner ysep=0pt},
    receiver/.style={fill=blue!9,draw,semithick},
    message/.style={fill=white,inner sep=1.5pt}
]
\node[sourcefile] (s1-file) at (-3.20,2.10) {local file $\boldsymbol F_1$};
\node[sourcecache,anchor=north] (s1-cache) at (s1-file.south) {cache $\varphi_1(\mathbf F)$};
\draw[semithick] (s1-file.north west) rectangle (s1-cache.south east);
\draw[semithick] (s1-file.south west) -- (s1-file.south east);
\node at ([yshift=0.28cm]s1-file.north) {Source $\sigma_1$};

\node[sourcefile] (s2-file) at (-0.50,2.10) {local file $\boldsymbol F_2$};
\node[sourcecache,anchor=north] (s2-cache) at (s2-file.south) {cache $\varphi_2(\mathbf F)$};
\draw[semithick] (s2-file.north west) rectangle (s2-cache.south east);
\draw[semithick] (s2-file.south west) -- (s2-file.south east);
\node at ([yshift=0.28cm]s2-file.north) {Source $\sigma_2$};

\node[sourcefile] (sN-file) at (3.20,2.10) {local file $\boldsymbol F_N$};
\node[sourcecache,anchor=north] (sN-cache) at (sN-file.south) {cache $\varphi_N(\mathbf F)$};
\draw[semithick] (sN-file.north west) rectangle (sN-cache.south east);
\draw[semithick] (sN-file.south west) -- (sN-file.south east);
\node at ([yshift=0.28cm]sN-file.north) {Source $\sigma_N$};
\node at (1.35,1.78) {$\cdots$};

\coordinate (receiver-top) at (0,-0.30);
\coordinate (receiver-bottom) at (0,-1.02);
\coordinate (receiver-nw) at (s1-file.west |- receiver-top);
\coordinate (receiver-ne) at (sN-file.east |- receiver-top);
\coordinate (receiver-sw) at (s1-file.west |- receiver-bottom);
\coordinate (receiver-se) at (sN-file.east |- receiver-bottom);
\draw[receiver] (receiver-sw) rectangle (receiver-ne);
\node at ($(receiver-sw)!0.5!(receiver-ne)$) {$\psi_{\boldsymbol\alpha}(X_{1,\boldsymbol\alpha},X_{2,\boldsymbol\alpha},\ldots,X_{N,\boldsymbol\alpha})=\boldsymbol\alpha\mathbf F$};
\node[below=0.10cm] at ($(receiver-sw)!0.5!(receiver-se)$) {Receiver $\rho$};

\coordinate (r1) at (s1-cache.center |- receiver-top);
\coordinate (r2) at (s2-cache.center |- receiver-top);
\coordinate (rN) at (sN-cache.center |- receiver-top);
\draw[->,semithick] ([xshift=-3pt]s1-cache.south) -- node[message,left=2pt,pos=0.48] {$X_{1,\boldsymbol\alpha}$} ([xshift=-3pt]r1);
\draw[->,semithick,dashed] ([xshift=3pt]r1) -- node[message,right=2pt,pos=0.48] {$\boldsymbol\alpha$} ([xshift=3pt]s1-cache.south);
\draw[->,semithick] ([xshift=-3pt]s2-cache.south) -- node[message,left=2pt,pos=0.48] {$X_{2,\boldsymbol\alpha}$} ([xshift=-3pt]r2);
\draw[->,semithick,dashed] ([xshift=3pt]r2) -- node[message,right=2pt,pos=0.48] {$\boldsymbol\alpha$} ([xshift=3pt]s2-cache.south);
\draw[->,semithick] ([xshift=-3pt]sN-cache.south) -- node[message,left=2pt,pos=0.48] {$X_{N,\boldsymbol\alpha}$} ([xshift=-3pt]rN);
\draw[->,semithick,dashed] ([xshift=3pt]rN) -- node[message,right=2pt,pos=0.48] {$\boldsymbol\alpha$} ([xshift=3pt]sN-cache.south);
\end{tikzpicture}%
}
\caption{The model of cache-aided linear computation $(\Fq,N,M)$.}
\label{fig:cache-aided-linear-computation}
\end{figure}

We consider a \emph{cache-aided linear computation model} $(\mathbb{F}_q,N, M)$ as depicted in Fig.\,\ref{fig:cache-aided-linear-computation}.
In the model, there are $N$  \emph{sources} $\sigma_1,\sigma_2,\cdots,\sigma_N$ and a  \emph{receiver} $\rho$, where each source is connected to the receiver by a noiseless link.
For each $i\in[N]\triangleq\{1,2,\cdots,N\}$, let the \emph{file} generated by the source~$\sigma_i$ be a  random $L$-dimensional row vector $\boldsymbol{F}_i=(F_{i,1},F_{i,2},\cdots,F_{i,L})$ uniformly distributed over   $\mathbb{F}_q^{L}$,\footnote{In this paper, for any positive integer $Z$, we use $\mathbb{F}_q^{Z}$ to denote the set of all $Z$-dimensional row vectors over $\mathbb{F}_q$.}
where $\mathbb{F}_q$ is a finite field of size $q$ and $L$ is a positive integer  referred to as the \emph{file length}.
All files $\boldsymbol{F}_1,\boldsymbol{F}_2,\cdots,\boldsymbol{F}_N$
are assumed to be  mutually independent.
Further, let
$
\mathbf{F}\triangleq
\big[\boldsymbol{F}_1^{\mathsf T}\ \boldsymbol{F}_2^{\mathsf T}\ \cdots\ \boldsymbol{F}_N^{\mathsf T}\big]^{\mathsf T},$
which is referred to as the \emph{file matrix}.

Furthermore, due to prior data exchanges, prefetching, or overhearing via the broadcast nature of the wireless channel, the sources have access to some side information about other files.
More precisely, for each $i\in [N]$,   the source~$\sigma_i$  can store an arbitrary (possibly nonlinear) function of the file matrix $\mathbf{F}$  as the \emph{cached content}, and the cached content takes at most $q^{ML}$ possible values.
Accordingly, we refer to $ML$ as the \emph{cache size}.
Here, $M$ is a nonnegative real number called the \emph{normalized cache size}.
The caching functions are fixed before the coefficient vector $\boldsymbol{\alpha}$ is revealed.
For each  \emph{coefficient vector} $\boldsymbol{\alpha}\triangleq(\alpha_1, \alpha_2, \cdots, \alpha_N)  \in \mathbb{F}_q^N$,
 the receiver $\rho$ is required to compute the linear function
$\boldsymbol{\alpha}\mathbf{F}=\sum_{i=1}^N \alpha_i \boldsymbol{F}_i
$
with zero error through the following steps:
\begin{itemize}
    \item[1)] the receiver $\rho$ first broadcasts the coefficient vector $\boldsymbol{\alpha}$ to all source nodes;
    \item[2)] each source node $\sigma_i$ generates and transmits an encoded message, say  $X_{i,\boldsymbol{\alpha}}$,
    based on its local file, its cached content  and the coefficient vector $\boldsymbol{\alpha}$;
    \item[3)] after receiving all transmitted messages, the receiver $\rho$ computes $\boldsymbol{\alpha}\mathbf{F}$ with zero error.
\end{itemize}

Now, we formally
define an  \emph{$L$-dimensional (linear-computation) code}  for the model $(\mathbb{F}_q,N, M)$  such that for each  coefficient vector
$\boldsymbol{\alpha}\in \mathbb{F}_q^N$, the receiver $\rho$ can compute  the linear function $\boldsymbol{\alpha}\mathbf{F}$ of the $L$-dimensional files $\boldsymbol{F}_1,\boldsymbol{F}_2,\cdots,\boldsymbol{F}_N$
with zero error.
To be specific, an $L$-dimensional code  for the model  $(\mathbb{F}_q,N, M)$   consists of
\begin{itemize}
\item a \emph{caching function} $\varphi_i$ for generating the cache content at every source node $\sigma_i$
$\varphi_i:~ \mathbb{F}_q^{N \times L} \to \Im\,\varphi_i$
such that the cached content takes at most $q^{ML}$ possible values,
i.e.,  $|\Im\,\varphi_i|\leq q^{ML}$;\footnote{In the   paper, we use $\Im\,\varphi$  to stand for the image set of a function $\varphi$.}

\item an \emph{encoding function} $\phi_{i,\boldsymbol{\alpha}}$ for each $\boldsymbol{\alpha} \in \mathbb{F}_q^N$ at every source node $\sigma_i$
$\phi_{i,\boldsymbol{\alpha}}:~ \mathbb{F}_q^{L}\times \Im\,\varphi_i \to \Im\,\phi_{i,\boldsymbol{\alpha}},$
which is used to generate a message transmitted from the source node $\sigma_i$ to the receiver $\rho$ based on the local file $\boldsymbol{F}_i$, the cached content $\varphi_i(\mathbf{F})$ and the coefficient vector~$\boldsymbol{\alpha}$;

\item a \emph{decoding function} $\psi_{\boldsymbol{\alpha}}$ for each $\boldsymbol{\alpha} \in \mathbb{F}_q^N$ at the receiver $\rho$, given by
$\psi_{\boldsymbol{\alpha}}:~ \prod_{i=1}^N \Im\,\phi_{i,\boldsymbol{\alpha}} \to \mathbb{F}_q^L,$
which is used to compute the linear function
$ \boldsymbol{\alpha}\mathbf{F} $
based on the messages received from all the $N$ sources.
\end{itemize}
We write this code as $\mbC=\big\{\varphi_i,\phi_{i,\boldsymbol{\alpha}},\psi_{\boldsymbol{\alpha}}:~i\in [N],  \boldsymbol{\alpha}\in\mathbb{F}_q^N\big\},$ and it
is called \emph{admissible} if
\begin{align*}
\psi_{\boldsymbol{\alpha}}\Big( \phi_{i,\boldsymbol{\alpha}}\big( \boldsymbol{F}_i ,\,\varphi_i(\mathbf{F})\big):~i\in [N] \Big)= \boldsymbol{\alpha}\mathbf{F} ,\qquad\forall~\boldsymbol{\alpha}\in \mathbb{F}_q^{N}.
\end{align*}
For the  code $\mbC$, we define
 \begin{equation}\label{def:compr-pair-R_1-R_2}
R_{\boldsymbol{\alpha}}(\mbC)\triangleq \frac{1}{\,L\,}\cdot
\sum\nolimits_{i=1}^N  \log|\Im\,\phi_{i,\boldsymbol{\alpha}}|, \qquad\forall~\boldsymbol{\alpha} \in \mathbb{F}_q^N,
\end{equation}
where   ``$\log$'' denotes ``$\log_q$'', the logarithm with base $q$.
Then, $R_{\boldsymbol{\alpha}}(\mbC)$ is the total  communication cost, normalized by the file length, from all source nodes to the receiver $\rho$ by using the code $\mbC$ for computing the linear function $\boldsymbol{\alpha}\mathbf{F}$.
The \emph{(worst-case) communication rate} of the code $\mbC$ is defined by
\begin{equation}\label{def:rate}
R(\mbC) \triangleq \max\nolimits_{\boldsymbol{\alpha} \in \mathbb{F}_q^N} R_{\boldsymbol{\alpha}}(\mbC).
\end{equation}
Consequently, the \emph{optimal communication rate} for the model  $(\mathbb{F}_q,N,M)$  is defined by
\begin{align*}
 &R (\mathbb{F}_q,N,M) \triangleq \inf\Big\{R(\mbC):~\text{$\mbC$ is an admissible $L$-} \\ &\text{dimensional code, where $L$ is an arbitrary positive integer.}\Big\}.
\end{align*}
When the model  $(\mathbb{F}_q,N,M)$  is clear from the context, we simply write the optimal rate $ R (\mathbb{F}_q,N,M)$ as~$ R(M)$.

 \section{Lower Bound on the Optimal\\ Communication Rate}

In this section, we   prove a general lower bound on the optimal communication rate
$R(\mathbb{F}_q,N,M)$,
which is applicable for an arbitrary finite field $\mathbb{F}_q$, an arbitrary  number $N$ of sources, and an arbitrary normalized cache size $M$.
Fix $1\leq m,\,n\leq N$. For $A\in\Fq^{m\times N}$ and $\theta\subseteq[N]$ with $|\theta|=n$, let $A[\theta]$ be the submatrix formed by the columns indexed by $\theta$. Define
\begin{align}
\avgrank_{n}(A)
&=\frac{1}{\binom{N}{n}}
\sum\nolimits_{\substack{\theta\subseteq[N]\\|\theta|=n}}
\operatorname{rank}(A[\theta]),
\label{eq:average-rank}\\
\avgrank_{m,n,N}
&=\max\nolimits_{A\in\Fq^{m\times N}}\avgrank_{n}(A).
\label{eq:max-average-rank}
\end{align}
In particular, $\avgrank_{1,1,N}=1$ and $\avgrank_{N,N,N}=N$.

\begin{theorem}
For every finite field $\Fq$, $N\geq1$, and $M\geq0$,
\begin{equation*}
R(\Fq,N,M)\geq
\max\nolimits_{1\leq m,\,n\leq N}\frac{N}{mn}
\bigl(\avgrank_{m,n,N}-(N-n)M\bigr).
\end{equation*}
\label{thm:multi-query-converse}
\end{theorem}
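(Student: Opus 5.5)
The plan is a genie-aided cut-set argument applied to several requests that share the same cache placement, followed by averaging over subsets of sources. Fix $1\leq m,n\leq N$, an admissible $L$-dimensional code $\mbC$, and a matrix $A\in\Fq^{m\times N}$ with rows $\boldsymbol a_1,\dots,\boldsymbol a_m$. Each row is a valid request, so $R_{\boldsymbol a_j}(\mbC)\leq R(\mbC)$ for every $j$.

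Fix $\theta\subseteq[N]$ with $|\theta|=n$. Suppose a genie gives the receiver the files $\boldsymbol F_i$, $i\notin\theta$. Consider the following collection:
\begin{equation*}
\mathcal{Y}_\theta=\bigl(\{X_{i,\boldsymbol a_j}: i\in\theta,\ j\in[m]\},\ \{\varphi_i(\mathbf F): i\notin\theta\}\bigr).
\end{equation*}
From $\boldsymbol F_i$ and $\varphi_i(\mathbf F)$ with $i\notin\theta$, the receiver can regenerate every message $X_{i,\boldsymbol a_j}$ with $i\notin\theta$. Together with the messages in $\mathcal{Y}_\theta$, it then holds all $N$ messages for each request $\boldsymbol a_j$. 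By admissibility, the decoders $\psi_{\boldsymbol a_j}$ output $\boldsymbol a_j\mathbf F$ for all $j$, that is, $A\mathbf F$. Subtracting the known contribution $A[\theta^c]\mathbf F_{\theta^c}$ leaves $A[\theta]\mathbf F_\theta$, where $\mathbf F_\theta$ denotes the rows of $\mathbf F$ indexed by $\theta$. The files are independent and uniform, so
\begin{equation*}
H\bigl(A[\theta]\mathbf F_\theta\mid \mathbf F_{\theta^c}\bigr)=\operatorname{rank}(A[\theta])\,L .
\end{equation*}
This conditional entropy is at most $H(\mathcal{Y}_\theta\mid\mathbf F_{\theta^c})\leq H(\mathcal{Y}_\theta)$. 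The latter is bounded by the log-cardinalities of the message images plus $(N-n)ML$ for the caches. Hence
\begin{equation*}
\operatorname{rank}(A[\theta])\,L\leq \sum_{j=1}^m\sum_{i\in\theta}\log|\Im\,\phi_{i,\boldsymbol a_j}|+(N-n)ML .
\end{equation*}

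Next, average this inequality over all $\binom{N}{n}$ subsets $\theta$. Each index $i$ lies in a fraction $n/N$ of these subsets. The double sum therefore averages to $\frac{n}{N}\sum_j L\,R_{\boldsymbol a_j}(\mbC)\leq \frac{nm}{N}L\,R(\mbC)$. Dividing by $L$ gives
\begin{equation*}
\avgrank_n(A)\leq \frac{nm}{N}R(\mbC)+(N-n)M .
\end{equation*}
Maximizing over $A$ replaces the left side by $\avgrank_{m,n,N}$. Rearranging, taking the infimum over codes and $L$, and then the maximum over $(m,n)$ yields the theorem.

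The main care point is the recoverability step. The caches of sources in $\theta^c$ may depend on all of $\mathbf F$, including $\mathbf F_\theta$, so they must be counted in $\mathcal Y_\theta$ rather than treated as free side information. Conditioning only on $\mathbf F_{\theta^c}$ and bounding $H(\mathcal{Y}_\theta\mid \mathbf F_{\theta^c})\leq H(\mathcal{Y}_\theta)$ handles this. The caches of sources in $\theta$ need not be counted, because only their messages are used. The rest is routine counting.
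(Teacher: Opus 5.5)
Your proposal is correct and follows essentially the same route as the paper. Both arguments use a genie that supplies $\boldsymbol F_{\theta^{\textup c}}$ and derive a per-subset bound $\operatorname{rank}(A[\theta])\,L\le(\text{message load of the sources in }\theta)+(N-n)ML$. Both then average over all $\binom{N}{n}$ subsets using $\binom{N-1}{n-1}/\binom{N}{n}=n/N$.

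The only difference is bookkeeping. The paper obtains the cache term through the chain rule, bounding $H(\boldsymbol X_{\theta^{\textup c}}\mid\boldsymbol X_\theta,\boldsymbol F_{\theta^{\textup c}})\le\sum_{i\in\theta^{\textup c}}H(\varphi_i(\mathbf F)\mid\boldsymbol F_i)$. You instead place the $\theta^{\textup c}$ caches directly into $\mathcal Y_\theta$ and bound log-cardinalities rather than message entropies.
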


\begin{proof}
See Appendix~\ref{app:converse-proof}.
\end{proof}

 % \begin{corollary}
% For every finite field $\Fq$ and every $N\geq1$,
% \[
% R(\Fq,N,0)=N.
% \]
% \label{cor:zero-cache}
% \end{corollary}

\section{Code Construction and Upper Bound on the Optimal Communication Rate}
\label{sec:PAP}\label{sec:pap}

In this section, we focus on the model $(\mathbb{F}_2,N,M)$ over the binary field $\mathbb{F}_2$. We construct a class of codes, referred to as
\emph{parity-aided-placement (PAP) codes}. With this code construction, we prove an
upper bound on the optimal communication rate $R(\mathbb{F}_2,N,M)$.

\subsection{Three Elementary Codes}
\label{subsec:PAP-three-codes}

We first present three elementary codes: an \emph{uncached code}, a \emph{full-placement code}, and an \emph{all-but-one parity (ABO) code}, which will be combined to construct a PAP code.

\smallskip
\noindent$\bullet$ \textbf{Uncached Code:}
Consider the   model
$(\mathbb{F}_2,N,M=0)$, where $N$ is an arbitrary positive integer and
$M=0$ indicates that no cached content is available at
any source node.
For each positive integer~$L$, we define  an      $L$-dimensional
\emph{uncached code}
$\mbC=\big\{\varphi_i,\phi_{i,\boldsymbol{\alpha}}, \psi_{\boldsymbol{\alpha}}: i\in[N],~\boldsymbol{\alpha}\in\mathbb{F}_2^N\big\}$
for the model $(\mathbb{F}_2,N,0)$   as follows:
\begin{itemize}

\item
no cached content is stored at any source node, i.e., the caching function
$\varphi_i$ at every source  $\sigma_i$ satisfies
$\varphi_i(\mathbf{F}) = \langle\,\rangle$ for $i\in[N],$
where $\langle\,\rangle$ denotes the empty vector;

\item the encoding function
$\phi_{i,\boldsymbol{\alpha}}$ for each coefficient vector
$\boldsymbol{\alpha}=(\alpha_1, \alpha_2, \cdots, \alpha_N)  \in\mathbb{F}_2^N$ at every source node
$\sigma_i$ is given by
$\phi_{i,\boldsymbol{\alpha}} \big( \boldsymbol{F}_i,\varphi_i(\mathbf{F}) \big) = \alpha_i\boldsymbol{F}_i;$

\item the  decoding function
$\psi_{\boldsymbol{\alpha}}$ for each
$\boldsymbol{\alpha}\in\mathbb{F}_2^N$ at the receiver $\rho$
is given by
\begin{align*}
\psi_{\boldsymbol{\alpha}}
\Big(
\phi_{i,\boldsymbol{\alpha}}
\big(
\boldsymbol{F}_i,\varphi_i(\mathbf{F})
\big):
i\in[N]
\Big)
=
\sum\nolimits_{i=1}^N
\alpha_i\boldsymbol{F}_i
=
\boldsymbol{\alpha}\mathbf{F}.
\end{align*}

\end{itemize}

For an arbitrary $i\in[N]$ and an arbitrary coefficient vector
$\boldsymbol{\alpha}=(\alpha_1, \alpha_2, \cdots, \alpha_N)  \in\mathbb{F}_2^N$, we can readily see that
\[
\log|\Im\,\phi_{i,\boldsymbol{\alpha}}| = \begin{cases} L, & \alpha_i\neq 0,\\ 0, & \alpha_i=0. \end{cases}
\]
This further implies that
\begin{align}
\sum\nolimits_{i=1}^N
\log|\Im\,\phi_{i,\boldsymbol{\alpha}}|&=
\big|\{i\in[N]:~\alpha_i\neq 0\}\big|L\nonumber\\
&=|\boldsymbol{\alpha}|L,
\label{un-code-R-alpha}
\end{align}
where $|\boldsymbol{\alpha}|$ denotes the Hamming weight of
$\boldsymbol{\alpha}$, i.e., the number of nonzero components of
$\boldsymbol{\alpha}$.

\smallskip
\noindent$\bullet$ \textbf{Full-Placement Code:}
Consider the model
$(\mathbb{F}_2,N,M=(N-1)/N)$, where $N$ is an arbitrary positive integer.
For each positive integer~$L$ such that $N$ divides $L$, we define an
$L$-dimensional \emph{full-placement code}
$
\mbC
=
\big\{
\varphi_i,
\phi_{i,\boldsymbol{\alpha}},
\psi_{\boldsymbol{\alpha}}:
i\in[N],~
\boldsymbol{\alpha}\in\mathbb{F}_2^N
\big\}
$
for the model $(\mathbb{F}_2,N,(N-1)/N)$ as follows.
First, we partition $[L]$ into $N$ disjoint subsets
$\mathcal{L}_1,\mathcal{L}_2,\cdots,\mathcal{L}_N$ of equal size, i.e.,
\begin{align}
[L]
=
\bigcup_{i=1}^N
\mathcal{L}_i,
\quad\text{and}\quad
|\mathcal{L}_i|
=
\frac{L}{N},
\quad
\forall~i\in[N].\label{eq:full-coordinate-partition}
\end{align}
\begin{itemize}

\item
the caching function $\varphi_i$ for generating the cache content at every
source node $\sigma_i$ is given by
\begin{align}
\varphi_i(\mathbf{F})
=
\Big(
F_{n,\ell}:
~n\in[N]\setminus\{i\},
~\ell\in\mathcal{L}_i
\Big),
\label{full-cache-function}
\end{align}
where we note that
$|\Im\,\varphi_i | = 2^{(N-1)|\mathcal{L}_i|} = 2^{(N-1)L/N};$

\item
the encoding function
$\phi_{i,\boldsymbol{\alpha}}$ for each coefficient vector
$\boldsymbol{\alpha}=(\alpha_1,\alpha_2,\cdots,\alpha_N)
\in\mathbb{F}_2^N$ at every source node $\sigma_i$ is given by
\begin{align}
\phi_{i,\boldsymbol{\alpha}}
\big(
\boldsymbol{F}_i,
\varphi_i(\mathbf{F})
\big)
=
\Bigg(
\sum\nolimits_{n=1}^N
\alpha_nF_{n,\ell}:
~\ell\in\mathcal{L}_i
\Bigg);
\label{full-encoding-function}
\end{align}

\item
the decoding function
$\psi_{\boldsymbol{\alpha}}$ for each
$\boldsymbol{\alpha}\in\mathbb{F}_2^N$ at the receiver $\rho$
is given by
\begin{align*}
&\psi_{\boldsymbol{\alpha}}
\Big(
\phi_{i,\boldsymbol{\alpha}}
\big(
\boldsymbol{F}_i,
\varphi_i(\mathbf{F})
\big):
i\in[N]
\Big)
\\
&=
\Bigg(
\sum\nolimits_{n=1}^N
\alpha_nF_{n,\ell}:
~\ell\in[L]
\Bigg)
=
\sum\nolimits_{n=1}^N
\alpha_n\boldsymbol{F}_n .
\end{align*}

\end{itemize}

We can readily see from \eqref{eq:full-coordinate-partition},
\eqref{full-encoding-function}, and the definition of the decoding
function that the code $\mbC$ is admissible.
Furthermore, for any coefficient vector
$\boldsymbol{\alpha}
=(\alpha_1,\alpha_2,\cdots,\alpha_N)\in\mathbb{F}_2^N$,
\begin{align}
\sum\nolimits_{i=1}^N
\log|\Im\,\phi_{i,\boldsymbol{\alpha}}|
=
\begin{cases}
0
&\quad\textup{if}~|\boldsymbol{\alpha}|=0,
\\
L
&\quad\textup{if}~|\boldsymbol{\alpha}|>0.
\end{cases}
\label{full-code-R-alpha}
\end{align}
To show this, when $|\boldsymbol{\alpha}|=0$, we have
$\boldsymbol{\alpha}=\boldsymbol{0}$, and hence
$\log|\Im\,\phi_{i,\boldsymbol{\alpha}}| = 0$ for all $i\in[N].$
This immediately implies that
$
\sum_{i=1}^N\log|\Im\,\phi_{i,\boldsymbol{\alpha}}|=0.
$
When $|\boldsymbol{\alpha}|>0$, by
\eqref{eq:full-coordinate-partition} and
\eqref{full-encoding-function}, we have
$\log|\Im\,\phi_{i,\boldsymbol{\alpha}}| = \frac{L}{N}$ for all $i\in[N].$
Therefore, we obtain that
$\sum_{i=1}^N \log|\Im\,\phi_{i,\boldsymbol{\alpha}}| = \sum_{i=1}^N \frac{L}{N} =L.$

\smallskip
\noindent$\bullet$ \textbf{All-But-One Parity Code:}
Consider the   model
$(\mathbb{F}_2,N,M=1/N)$, where $N\geq 2$ is an arbitrary positive integer.
For each positive integer~$L$ such that $N$ divides $L$, we define  an $L$-dimensional
\emph{all-but-one (ABO)  parity code}
$
\mbC
=
\big\{
\varphi_i,
\phi_{i,\boldsymbol{\alpha}},
\psi_{\boldsymbol{\alpha}}:
i\in[N],~
\boldsymbol{\alpha}\in\mathbb{F}_2^N
\big\}
$
for the model $(\mathbb{F}_2,N,1/N)$ as follows.
First, we partition $[L]$ into $N$ disjoint subsets $\mathcal{L}_1,\mathcal{L}_2,\cdots,\mathcal{L}_N$ of equal size as in~\eqref{eq:full-coordinate-partition}.
\begin{itemize}

\item
The  caching function  $\varphi_i$ for generating the cache content at every source node $\sigma_i$ is given by
\begin{align}
\varphi_i(\mathbf{F})
=
\Bigg(
\sum\nolimits_{n\in[N]\setminus\{i\}}
F_{n,\ell}:
~\ell\in\mathcal{L}_i
\Bigg),
\label{ABO-cache-function}
\end{align}
where we note that
$
\big|\Im\,\varphi_i\big|
=
2^{|\mathcal{L}_i|}
=
2^{L/N} $
by \eqref{eq:full-coordinate-partition};

\item
the encoding function
$\phi_{i,\boldsymbol{\alpha}}$ for each coefficient vector
$\boldsymbol{\alpha}=(\alpha_1,\alpha_2,\cdots,\alpha_N)
\in\mathbb{F}_2^N$ at every source node $\sigma_i$ is given by
\begin{align}
\phi_{i,\boldsymbol{\alpha}}
\big(
\boldsymbol{F}_i,
\varphi_i (\mathbf{F})
\big)
=
\big(
X_{i,\boldsymbol{\alpha}} ^{(\ell)}:
~\ell\in[L]
\big),
\label{ABO-encoding-function}
\end{align}
where
\begin{align}
&X_{i,\boldsymbol{\alpha}} ^{(\ell)}
\triangleq
\label{ABO-transmitted-symbol}\\\nonumber
&\begin{cases}
\alpha_i F_{i,\ell}
&\text{if}~|\boldsymbol{\alpha}|\leq N/2,
\\
\alpha_iF_{i,\ell}+\!\!
\displaystyle
\sum\nolimits_{n\in[N]\setminus\{i\}}
F_{n,\ell}
&\text{if}~
|\boldsymbol{\alpha}|>N/2~\text{and}~
\ell\in\mathcal{L}_i,
\\
(1-\alpha_i)F_{i,\ell},
&\text{if}~
|\boldsymbol{\alpha}|>N/2~\text{and}~
\ell\notin\mathcal{L}_i;
\end{cases}
\end{align}

\item
the decoding function
$\psi_{\boldsymbol{\alpha}}$ for each
$\boldsymbol{\alpha}\in\mathbb{F}_2^N$ at the receiver $\rho$
is given by
\begin{align*}
& \!\psi_{\boldsymbol{\alpha}}
(
\phi_{i,\boldsymbol{\alpha}}
(
\boldsymbol{F}_i,
\varphi_i (\mathbf{F})
):
i\in[N]
) \! \! = \! \!
(
\sum\nolimits_{i=1}^N
X_{i,\boldsymbol{\alpha}} ^{(\ell)}:
 \ell\in[L]
).
\end{align*}

\end{itemize}

We next show that the code $\mbC$ is admissible, i.e., for   each
$\boldsymbol{\alpha}=(\alpha_1,\alpha_2,\cdots,\alpha_N)
\in\mathbb{F}_2^N$ and   each $\ell\in[L]$,
$\sum_{i=1}^N X_{i,\boldsymbol{\alpha}} ^{(\ell)}=\sum_{i=1}^N \alpha_iF_{i,\ell};$
and
for any
$\boldsymbol{\alpha}=(\alpha_1,\alpha_2,\cdots,\alpha_N)
\in\mathbb{F}_2^N$,
\begin{align}
&\sum\nolimits_{i=1}^N
\log|\Im\,\phi_{i,\boldsymbol{\alpha}}|\label{ABO-code-R-alpha}
\\&=
\begin{cases}
|\boldsymbol{\alpha}|L
&\textup{if}~
|\boldsymbol{\alpha}|\leq N/2,
\\
\Big(
N-|\boldsymbol{\alpha}|
+
\frac{|\boldsymbol{\alpha}|}{N}
\Big)L
&\textup{if}~
|\boldsymbol{\alpha}|>N/2.
\end{cases}\nonumber
\end{align}
Towards this end, we consider two cases below.

\smallskip
\noindent\textbf{Case 1:}
$|\boldsymbol{\alpha}|\leq N/2$.

For this case, by \eqref{ABO-transmitted-symbol}, we have
$
\sum_{i=1}^N
X_{i,\boldsymbol{\alpha}}^{(\ell)}
=
\sum_{i=1}^N
\alpha_iF_{i,\ell} .
$ Moreover, by \eqref{ABO-encoding-function} and
\eqref{ABO-transmitted-symbol}, we can readily see that
\[
\log|\Im\,\phi_{i,\boldsymbol{\alpha}}| = \begin{cases} L &\quad\textup{if}~ \alpha_i\neq 0,\\ 0 &\quad\textup{if}~ \alpha_i=0. \end{cases}
\]
This immediately implies that
$
\sum_{i=1}^N
\log|\Im\,\phi_{i,\boldsymbol{\alpha}}|
=
|\boldsymbol{\alpha}|L.$

\smallskip
\noindent\textbf{Case 2:}
$|\boldsymbol{\alpha}|>N/2$.

Let $\textup{I}(\ell)\in[N]$ denote the unique
index such that
$
\ell\in\mathcal{L}_{\textup{I}(\ell)}.
$
By \eqref{ABO-transmitted-symbol}, we have
\begin{align*}
&\sum\nolimits_{i=1}^N
X_{i,\boldsymbol{\alpha}}^{(\ell)}
=
X_{\textup{I}(\ell),\boldsymbol{\alpha}}^{(\ell)}
+
\sum_{i\in[N]\setminus\{\textup{I}(\ell)\}}
X_{i,\boldsymbol{\alpha}}^{(\ell)}
\\
&=
\alpha_{\textup{I}(\ell)}
F_{\textup{I}(\ell),\ell}
+
\sum_{n\in[N]\setminus\{\textup{I}(\ell)\}}
F_{n,\ell} +
\sum_{i\in[N]\setminus\{\textup{I}(\ell)\}}
(1-\alpha_i)F_{i,\ell}
\\
&=
\alpha_{\textup{I}(\ell)}
F_{\textup{I}(\ell),\ell}
+
\sum_{i\in[N]\setminus\{\textup{I}(\ell)\}}
\alpha_iF_{i,\ell}
\\
&=
\sum\nolimits_{i=1}^N
\alpha_iF_{i,\ell},
\end{align*}
where the third equality follows from
$
1+(1-\alpha_i)=\alpha_i
$
over $\mathbb{F}_2$ for every $i\in[N]\setminus\{\textup{I}(\ell)\}$.
By
\eqref{eq:full-coordinate-partition}, \eqref{ABO-encoding-function} and
\eqref{ABO-transmitted-symbol}, we have
\[
\log|\Im\,\phi_{i,\boldsymbol{\alpha}}| = \begin{cases} L/N &\quad\textup{if}~\alpha_i\neq 0,\\ L &\quad\textup{if}~\alpha_i=0. \end{cases}
\]
Therefore, we have
\begin{align*}
\sum\nolimits_{i=1}^N
\log|\Im\,\phi_{i,\boldsymbol{\alpha}}|
&=
|\boldsymbol{\alpha}|\frac{L}{N}
+
\big(
N-|\boldsymbol{\alpha}|
\big)L
\nonumber\\
&=
\left(
N-|\boldsymbol{\alpha}|
+
\frac{|\boldsymbol{\alpha}|}{N}
\right)L.
\end{align*}

\subsection{Parity-Aided-Placement Code}
\label{subsec:PAP-construction}

In this subsection,
we consider the model $(\mathbb{F}_2,N,M)$ with $N\geq2$ and construct a parity-aided-placement (PAP) code  for this model by concatenating an uncached code, a full-placement code, and an ABO parity code.

We first present some notation.
Let $L_0,L_1,L_2$ be arbitrary nonnegative integers such that
$N$ divides both $L_1$ and $L_2$ and
\begin{align}
(N-1)L_1+L_2
\leq
NM(L_0+L_1+L_2).
\label{PAP-cache-constraint}
\end{align}
Let
$
L\triangleq L_0+L_1+L_2>0.
$
As will become clear below, the condition in
\eqref{PAP-cache-constraint} guarantees that the cache capacity
constraint is satisfied. Further, we let
\begin{align}
\lambda_0\triangleq\frac{L_0}{L},\quad\lambda_1\triangleq\frac{L_1}{L},\quad\text{ and }\quad
\lambda_2\triangleq\frac{L_2}{L}.\label{def-lambda}
\end{align}
Correspondingly, the inequality \eqref{PAP-cache-constraint} can be rewritten as
\begin{align}
(N-1)\lambda_1+\lambda_2\leq NM.\label{lambda-cond}
\end{align}

For every $i\in[N]$, partition the file as $\boldsymbol F_i=[\boldsymbol F_i^{(0)}\ \boldsymbol F_i^{(1)}\ \boldsymbol F_i^{(2)}]$, where $\boldsymbol F_i^{(j)}\in\mathbb F_2^{L_j}$ for $j=0,1,2$. Let $\mathbf F^{(j)}\in\mathbb F_2^{N\times L_j}$ be the matrix whose $i$th row is $\boldsymbol F_i^{(j)}$.

Furthermore, we let $\mbC^{(0)}$, $\mbC^{(1)}$, and $\mbC^{(2)}$ denote an $L_0$-dimensional uncached code for the model $(\mathbb{F}_2,N,0)$, an $L_1$-dimensional full-placement code for the model $\big(\mathbb{F}_2,N,(N-1)/N\big)$, and an $L_2$-dimensional ABO parity code for the model $(\mathbb{F}_2,N,1/N)$, respectively. Specifically, we write
\[
\mbC^{(j)}
=
\big\{
\varphi_i^{(j)},
\phi_{i,\boldsymbol{\alpha}}^{(j)},
\psi_{\boldsymbol{\alpha}}^{(j)}:
i\in[N],~
\boldsymbol{\alpha}\in\mathbb{F}_2^N
\big\},\quad j=0,1,2.
\]
 Based on these three codes, we construct an $L$-dimensional  PAP  code
$
\mbC
=
\big\{
\varphi_i,
\phi_{i,\boldsymbol{\alpha}},
\psi_{\boldsymbol{\alpha}}:
i\in[N],~
\boldsymbol{\alpha}\in\mathbb{F}_2^N
\big\}
$
for the model $(\mathbb{F}_2,N,M)$ as follows:
\begin{itemize}

\item
the caching function $\varphi_i$ at every source node $\sigma_i$ is
given by
$\varphi_i(\mathbf{F}) = \Big[ \varphi_i^{(j)}(\mathbf{F}^{(j)}):~j=0,1,2 \Big];$

\item
the encoding function $\phi_{i,\boldsymbol{\alpha}}$ for each
coefficient vector
$\boldsymbol{\alpha}\in\mathbb{F}_2^N$ at every source node
$\sigma_i$ is given by
\begin{align*}
&\phi_{i,\boldsymbol{\alpha}}
\big(
\boldsymbol{F}_i,\varphi_i(\mathbf{F})
\big)\!\! =\!\!
\Big[
\phi_{i,\boldsymbol{\alpha}}^{(j)}
\big(
\boldsymbol{F}_i^{(j)},
\varphi_i^{(j)}(\mathbf{F}^{(j)})
\big):~j=0,1,2
\Big];
\end{align*}

\item
the decoding function $\psi_{\boldsymbol{\alpha}}$ for each
$\boldsymbol{\alpha}\in\mathbb{F}_2^N$ at the receiver $\rho$ is
given by
\begin{align*}
&\psi_{\boldsymbol{\alpha}}
(
\phi_{i,\boldsymbol{\alpha}}
(
\boldsymbol{F}_i,\varphi_i(\mathbf{F})
):
i\in[N]
)
\nonumber\\
&=
\big[\psi_{\boldsymbol{\alpha}}^{(j)}
(
\phi_{i,\boldsymbol{\alpha}}^{(j)}
(
\boldsymbol{F}_i^{(j)},\varphi_i^{(j)}(\mathbf{F}^{(j)})
):
i\in[N]
):\,j=0,1,2
\big].
\end{align*}

\end{itemize}

Since the three elementary codes are admissible, we can readily see that
the code $\mbC$ is also admissible. For every $i\in[N]$, the caching functions act on disjoint file blocks, so
$|\Im\,\varphi_i|=\prod_{j=0}^2|\Im\,\varphi_i^{(j)}| =2^{[(N-1)L_1+L_2]/N}\leq2^{ML},$
where the inequality follows from~\eqref{PAP-cache-constraint}.

In the following,
we calculate the communication rate of the code $\mbC$.
Consider an arbitrary coefficient vector
$\boldsymbol{\alpha}\in\mathbb{F}_2^N$.
For each $i\in[N]$, we readily see that
$|\Im\,\phi_{i,\boldsymbol{\alpha}}|=|\Im\,\phi_{i,\boldsymbol{\alpha}}^{(0)}|\cdot |\Im\,\phi_{i,\boldsymbol{\alpha}}^{(1)}| \cdot |\Im\,\phi_{i,\boldsymbol{\alpha}}^{(2)}|.$
It follows that
\begin{align}
&\sum\nolimits_{i=1}^N
\log|\Im\,\phi_{i,\boldsymbol{\alpha}}|=
\sum\nolimits_{j=0}^2\sum\nolimits_{i=1}^N
\log|\Im\,\phi_{i,\boldsymbol{\alpha}}^{(j)}|.
\label{PAP-message-decomposition}
\end{align}

Substituting the elementary-code rates in~\eqref{un-code-R-alpha}, \eqref{full-code-R-alpha}, and~\eqref{ABO-code-R-alpha} into~\eqref{PAP-message-decomposition}, and using~\eqref{def-lambda}, gives
\begin{align}
&R_{\boldsymbol{\alpha}}(\mbC)
=
\frac{1}{L}
\sum\nolimits_{i=1}^N
\log|\Im\,\phi_{i,\boldsymbol{\alpha}}|
\label{PAP-R-alpha}\\
&=
\begin{cases}
0
&\textup{if}~
|\boldsymbol{\alpha}|=0,
\\ \lambda_0|\boldsymbol{\alpha}|+\lambda_1+\lambda_2|\boldsymbol{\alpha}|
&\textup{if}~
1\leq |\boldsymbol{\alpha}|\leq N/2,
\\
\lambda_0|\boldsymbol{\alpha}|
+ \lambda_1+
 \lambda_2 \big(
N-|\boldsymbol{\alpha}|
 +
\frac{|\boldsymbol{\alpha}|}{N}
\big)
&\textup{if}~
 |\boldsymbol{\alpha}| >N/2.
\end{cases}
\nonumber
\end{align}
We readily see that
$R_{\boldsymbol{\alpha}}(\mbC)$ depends only on
the Hamming weight $|\boldsymbol{\alpha}|$. 
For notational simplicity, we write $w\triangleq |\boldsymbol{\alpha}| $. Further,
we note that 
\begin{align}
g_{N}(w)&\triangleq\min\nolimits\left\{w, N-w +\frac{w}{N}\right\}\label{g-N}\\
&=
\begin{cases}
w  & \textup{if}~0\leq w\leq N/2,\\
N-w+\frac{w}{N} &\textup{if}~w>N/2.
\end{cases} \nonumber
\end{align}
With this, we can rewrite $R_{\boldsymbol{\alpha}}(\mbC)$ as
\[R_{\boldsymbol{\alpha}}(\mbC)= \begin{cases} 0 &\textup{if}~ w=0, \\ \lambda_0 w+\lambda_1+\lambda_2 g_N(w) &\textup{if}~ 1\leq w\leq N. \end{cases}\]
Consequently, we have 
\begin{align}
 R(\mbC)&=
\max\nolimits_{\boldsymbol{\alpha}\in\mathbb{F}_{2}^N}
R_{\boldsymbol{\alpha}}(\mbC)\nonumber\\
&=\max_{w\in[N]}
\big\{
\lambda_0w+\lambda_1
+\lambda_2 g_N(w)
\big\}.
\label{PAP-code-rate}
\end{align}  

We note that $L_0,L_1,L_2$ can be arbitrarily chosen
such that $N$ divides both $L_1$ and $L_2$
and~\eqref{PAP-cache-constraint} holds.
Thus, by~\eqref{def-lambda}, any nonnegative
$(\lambda_0,\lambda_1,\lambda_2)$ satisfying
$\lambda_0+\lambda_1+\lambda_2=1$
and~\eqref{lambda-cond}
can be approximated arbitrarily closely.
Together with~\eqref{PAP-code-rate},
this immediately gives the following theorem.

\begin{theorem}
Consider the model $(\mathbb{F}_2,N,M)$ with $N\geq2$. Let
$R_{\mathrm{PAP}}(N,M) \triangleq \min_{\lambda_0,\lambda_1,\lambda_2}~ \max_{w\in[N]} \big\{ \lambda_0w+\lambda_1+\lambda_2g_N(w) \big\},$
where the minimum is subject to
$\lambda_0,\lambda_1,\lambda_2\geq0, \lambda_0+\lambda_1+\lambda_2=1, (N-1)\lambda_1+\lambda_2\leq NM.$
Then,
$R (\mathbb{F}_2,N,M)\leq R_{\mathrm{PAP}}(N,M).$
\end{theorem}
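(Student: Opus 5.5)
The plan is to turn the explicit PAP construction into a statement about the infimum, by approximating an optimal fractional parameter choice with integer block lengths. First I will check that the minimum in the definition of $R_{\mathrm{PAP}}(N,M)$ is attained. The objective $h(\lambda)\triangleq\max_{w\in[N]}\{\lambda_0 w+\lambda_1+\lambda_2 g_N(w)\}$ is a maximum of finitely many linear functions, hence continuous. The feasible set $\Lambda$ (nonnegativity, $\lambda_0+\lambda_1+\lambda_2=1$, and \eqref{lambda-cond}) is a compact polytope. It is nonempty because $(1,0,0)\in\Lambda$ for every $M\geq0$. Hence some $\lambda^*\in\Lambda$ satisfies $h(\lambda^*)=R_{\mathrm{PAP}}(N,M)$.

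Next, fix $\epsilon>0$. I would approximate $\lambda^*$ by rational points $(L_0,L_1,L_2)/L$ with $N\mid L_1$ and $N\mid L_2$ that satisfy \eqref{PAP-cache-constraint} exactly. The concrete choice is as follows. Take $L=NK$ for a large integer $K$, and set $L_1=N\lfloor K\lambda_1^*\rfloor$ and $L_2=N\lfloor K\lambda_2^*\rfloor$. Then put $L_0=L-L_1-L_2\geq0$.

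Rounding $\lambda_1$ and $\lambda_2$ down only decreases $(N-1)\lambda_1+\lambda_2$, so \eqref{PAP-cache-constraint} still holds. This is the one point needing care, because the cache constraint must hold exactly rather than approximately. Rounding down is precisely what avoids violating it: the mass removed from $\lambda_1$ and $\lambda_2$ is shifted into $\lambda_0$, which uses no cache. Each coordinate moves by at most $N/L=1/K$. Therefore the resulting $\lambda=(\lambda_0,\lambda_1,\lambda_2)\in\Lambda$ satisfies $\|\lambda-\lambda^*\|_\infty\leq 2/K$.

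By the construction in Section~\ref{subsec:PAP-construction}, the corresponding $L$-dimensional PAP code is admissible. Its cache satisfies $|\Im\,\varphi_i|\leq 2^{ML}$, and by \eqref{PAP-code-rate} its rate is $R(\mbC)=h(\lambda)$. Since $h$ is Lipschitz with constant at most $3N$ (the coefficients $w$, $1$, and $g_N(w)$ are all at most $N$), we get $R(\mbC)\leq h(\lambda^*)+6N/K$.

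Finally, $R(\Ftwo,N,M)$ is an infimum over admissible codes of all lengths. It follows that $R(\Ftwo,N,M)\leq R_{\mathrm{PAP}}(N,M)+6N/K$ for every $K$. Letting $K\to\infty$ yields the theorem.
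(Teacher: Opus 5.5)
Your proof is correct and follows essentially the paper's argument (Section~\ref{subsec:PAP-construction} together with Appendix~\ref{app:finite-length}): round $\lambda_1,\lambda_2$ down to admissible fractions so that \eqref{PAP-cache-constraint} holds exactly, move the slack into uncached coordinates, and use continuity of the finite maximum in \eqref{PAP-code-rate}. Your compactness check that the minimum is attained and your explicit $6N/K$ Lipschitz estimate make the paper's limiting argument quantitative. One small inaccuracy: $\lambda_0$ can move by up to $2/K$, not $1/K$, but your $\ell_\infty$ bound of $2/K$ already accounts for this.
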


 The following theorem gives  an explicit expression for
$R_{\mathrm{PAP}}(N,M)$,
whose proof is given in Appendix~\ref{app:pap-proof}.

% \begin{table}[!t]
% \caption{PAP placement and delivery for $N=3$, $M=1/3$, and $\textcolor{red}{\mathcal S_{\alpha}}=\{1,2\}$.}
% \label{tab:pap-example-placement}
% \centering
% \footnotesize
% \setlength{\tabcolsep}{3pt}
% \renewcommand{\arraystretch}{1.08}
% \begin{tabular}{@{}lccl@{}}
% \toprule
% Type (count) & Holder & Cached & Transmissions \\
% \midrule
% Uncached (6) & -- & -- & $\sigma_1:x_1,\ \sigma_2:x_2$ \\
% ABO (3) & $\sigma_1$ & $x_2\oplus x_3$ & $\sigma_1:t,\ \sigma_3:x_3$ \\
% ABO (3) & $\sigma_2$ & $x_1\oplus x_3$ & $\sigma_2:t,\ \sigma_3:x_3$ \\
% ABO (3) & $\sigma_3$ & $x_1\oplus x_2$ & $\sigma_3:x_1\oplus x_2$ \\
% Full (6) & $\sigma_h$ & $(x_i:i\ne h)$ & $\sigma_h:x_1\oplus x_2$ \\
% \bottomrule
% \end{tabular}
% \par\smallskip
% \parbox{\columnwidth}{\footnotesize At coordinate $\ell$, write $x_i=F_{i,\ell}$ and $t=x_1\oplus x_2\oplus x_3$. Counts specify numbers of coordinates. Each source holds two full and three ABO coordinates, caching $2\cdot2+3=7$ bits. Only the listed sources transmit.}
% \end{table}

\begin{theorem} 
For the model $(\mathbb{F}_2,N,M)$ with $N\geq3$,
$R_{\mathrm{PAP}}(N,M)$ is given as follows.

\noindent$\bullet$ If $N$ is odd,   $R_{\mathrm{PAP}}(N,M)$ is
\begin{equation*}
\begin{cases}
N-N(N-1)M 
&\textup{if}~M\in[0,\Msmall],\\
1+\frac{(N-1)^2(N-1-NM)}{2N^2-4N+1} 
&\textup{if}~M\in[\Msmall,\Mfull],\\
1 &\textup{if}~M\geq\Mfull.
\end{cases}
%\label{eq:pap-odd-rate}
\end{equation*}

\noindent$\bullet$ 
If $N$ is even,  $R_{\mathrm{PAP}}(N,M)$ is
\begin{equation*}
\begin{cases}
N-N(N-1)M 
&\textup{if}~M\in[0,\Msmall],\\
\frac{N}{2}+1-\frac{3N-2}{2}M 
&\textup{if}~M\in[\Msmall,\Meven],\\
\frac{N(3N+1)-(3N-2)NM}{2(3N-1)} 
&\textup{if}~M\in[\Meven,\Mfull],\\
1 &\textup{if}~M\geq\Mfull.
\end{cases}
%\label{eq:pap-even-rate}
\end{equation*}
\label{thm:pap-minimax}
\end{theorem}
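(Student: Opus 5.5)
The plan is to reduce the minimax in the definition of $R_{\mathrm{PAP}}(N,M)$ to a two-variable linear program. I then solve it regime by regime, certifying each value with an explicit primal point and a matching dual, i.e.\ convex-combination, certificate.

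\textbf{Step 1 (reducing the inner maximum).} For fixed $(\lambda_0,\lambda_1,\lambda_2)$, write $f(w)=\lambda_0w+\lambda_1+\lambda_2g_N(w)$. By~\eqref{g-N}, $g_N$ is the minimum of two affine functions, so it is concave, and hence $f$ is concave in $w$. On $[1,\lfloor N/2\rfloor]$ the slope of $f$ is $\lambda_0+\lambda_2\geq0$. Beyond $N/2$ the slope is $\lambda_0-\lambda_2(1-1/N)$, which is constant, so $f$ is monotone on $[\lceil N/2\rceil,N]$. Therefore the maximum over $w\in[N]$ is attained on a small candidate set $W$:
\begin{itemize}
\item for odd $N$, $W=\{(N+1)/2,\ N\}$, where $g_N((N+1)/2)=\frac{N-1}{2}+\frac{N+1}{2N}$, and this exceeds $g_N((N-1)/2)$;
\item for even $N$, $W=\{N/2,\ N/2+1,\ N\}$, where $N/2+1$ must be kept because $f(N/2+1)>f(N/2)$ once $\lambda_0>\lambda_2(\tfrac12-\tfrac1N)$.
\end{itemize}

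\textbf{Step 2 (eliminating $\lambda_1$).} Substitute $\lambda_1=1-\lambda_0-\lambda_2$. The candidate values become
\begin{itemize}
\item $f(N)=1+(N-1)\lambda_0$;
\item $f(k)=1+(k-1)\lambda_0+(g_N(k)-1)\lambda_2$ for each other $k\in W$.
\end{itemize}
The cache constraint becomes $(N-1)\lambda_0+(N-2)\lambda_2\geq N-1-NM$, together with $\lambda_0,\lambda_2\geq0$ and $\lambda_0+\lambda_2\leq1$. This is a linear program in $(\lambda_0,\lambda_2,t)$: minimize $t$ subject to $t\geq f(k)$ for all $k\in W$ and the constraints above. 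Its optimum is convex and piecewise linear in $M$, so it suffices to identify the active constraints in each regime.

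\textbf{Step 3 (primal points).} I will exhibit a feasible point in each regime.
\begin{itemize}
\item $M\leq\frac{1}{2N-1}$: take $\lambda_1=0$, $\lambda_2=NM$, $\lambda_0=1-NM$. Then $f(N)=N-N(N-1)M$. The breakpoint $M=\frac1{2N-1}$ is exactly where $f(\lceil N/2\rceil)$ (odd $N$), respectively $f(N/2)$ (even $N$), catches up with $f(N)$.
\item $M\geq\frac{N-1}{N}$: take $\lambda_1=1$, which gives rate $1$.
\item Intermediate regimes, odd $N$: keep the cache constraint tight and equalize $f(N)=f((N+1)/2)$. Solving these two linear equations should yield the stated affine expression with denominator $2N^2-4N+1$.
\item Intermediate regimes, even $N$: in the first middle regime, equalize $f(N)$ and $f(N/2)$ with $\lambda_0$ and $\lambda_2$ both positive. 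In the second, the constraint $f(N/2+1)$ also becomes active. The breakpoint $M=\frac{2}{3N-2}$ should be where $\lambda_0$ reaches the threshold $\lambda_2(\tfrac12-\tfrac1N)$ identified in Step 1.
\end{itemize}

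\textbf{Step 4 (matching lower bounds).} For each regime I will exhibit nonnegative weights $\mu_k$ on the candidate constraints with $\sum_{k\in W}\mu_k=1$, and a multiplier $\nu\geq0$ on the cache constraint. These must make $\sum_k\mu_kf(k)+\nu\big((N-1)\lambda_0+(N-2)\lambda_2-(N-1-NM)\big)$ have nonnegative coefficients on $\lambda_0$ and $\lambda_2$. That forces $\max_{w}f(w)\geq$ the stated value for every feasible point. By LP duality, the weights are obtained by solving for the multipliers of the active constraints found in Step 3. I also need to check the constraint $\lambda_0+\lambda_2\leq1$ near the regime endpoints.

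I expect the main obstacle to be the even case with three active candidates. The difficulty is correctly locating the transition at $M=\frac{2}{3N-2}$ and verifying that the multipliers there are nonnegative across the whole interval $[\frac{2}{3N-2},\frac{N-1}{N}]$. I will first solve the $3\times3$ equality system at a generic $M$ and then check the signs. Throughout, the hypothesis $N\geq3$ is needed so that the coefficients $N-2$ and $2N^2-4N+1$ are positive.
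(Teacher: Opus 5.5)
Your LP-duality route is a legitimate alternative to the paper's, and your odd-$N$ plan works. Equalizing $f(N)$ and $f(\frac{N+1}{2})$ on the tight cache line gives the middle formula, and a certificate of your form exists there: $\mu_N=\frac{1}{2N^2-4N+1}$, $\mu_{(N+1)/2}=\frac{2N(N-2)}{2N^2-4N+1}$, $\nu=\frac{(N-1)^2}{2N^2-4N+1}$. The even-$N$ part of Step~3, however, rests on a wrong picture of the active constraints, and the points you describe do not reproduce the theorem.

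\begin{itemize}
\item \emph{The first breakpoint.} At $\lambda_1=0$, $\lambda_2=NM$ one has $f(N/2)=\frac N2$, $f(N/2+1)=\frac N2+1-\frac{3N-2}{2}M$, and $f(N)-f(N/2+1)=\frac{N-2}{2}\bigl(1-(2N-1)M\bigr)$. So $M=\frac1{2N-1}$ is where $f(N/2+1)$, not $f(N/2)$, catches up with $f(N)$; $f(N/2)$ does so only at $M=\frac1{2N-2}$.
\item \emph{The regime $[\frac1{2N-1},\frac2{3N-2}]$.} The optimum stays at the vertex $\lambda_1=0$, $\lambda_2=NM$, with $f(N/2+1)$ the unique maximum. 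Equalizing $f(N)$ and $f(N/2)$ instead yields a point where $f(N/2+1)$ exceeds the claimed value; for $N=4$, $M=0.17$ you get $2.16$ versus $3-5M=2.15$.
\item \emph{The regime $[\frac2{3N-2},\frac{N-1}{N}]$.} Only $f(N/2)$ and $f(N/2+1)$ are active. At their crossing, the slope $\lambda_0-\lambda_2(1-\frac1N)$ of $f$ on $\{N/2+1,\dots,N\}$ equals $-\frac{N(N-1-NM)}{(3N-1)(N-2)}<0$. Hence $f(N)<f(N/2+1)$, and complementary slackness forces $\mu_N=0$. Three-way equality with a tight cache has no solution except at $M=\frac{N-1}{N}$. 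The correct certificate is $\mu_{N/2}=\frac{2N-1}{3N-1}$, $\mu_{N/2+1}=\frac{N}{3N-1}$, $\nu=\frac{3N-2}{2(3N-1)}$.
\end{itemize}
Your characterization of $M=\frac2{3N-2}$ via $\lambda_0=\lambda_2(\frac12-\frac1N)$ is correct. But it marks the switch from ``$f(N/2+1)$ alone at the corner $\lambda_1=0$'' to ``$f(N/2)=f(N/2+1)$,'' not the switch you describe.

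Second, the certificate template in Step~4 is insufficient, and its sign is reversed: you must subtract $\nu$ times the cache slack, otherwise no lower bound on $\sum_k\mu_kf(k)$ follows. The template also has no multiplier for $\lambda_0+\lambda_2\le1$, i.e., $\lambda_1\ge0$. Yet for $M\le\frac1{2N-1}$ (both parities), and for even $N$ on $[\frac1{2N-1},\frac2{3N-2}]$, the optimum lies on $\lambda_1=0$ with $\lambda_0,\lambda_2>0$. So this multiplier must be strictly positive over the whole interval, not just near its endpoints. Without it, your template can certify only the value of the LP with $\lambda_1\ge0$ dropped. For odd $N$ and $M<\frac1{2N-1}$ that value is $1+\frac{(N-1)^2(N-1-NM)}{2N^2-4N+1}<N-N(N-1)M$.

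Write $s=NM-(N-1)\lambda_1-\lambda_2\ge0$ for the cache slack. The needed identities are $f(N)-\bigl(N-N(N-1)M\bigr)=(N-1)s+(N-1)(N-2)\lambda_1$ and $f(N/2+1)-\bigl(\frac N2+1-\frac{3N-2}{2}M\bigr)=\frac{3N-2}{2N}s+\frac{(2N-1)(N-2)}{2N}\lambda_1$. Both hold at every feasible point.

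The paper avoids this bookkeeping by working on the tight cache line in the single variable $\lambda_1$. There, $\lambda_1\ge0$ is just the left end of the feasible interval, and each candidate rate is affine in $\lambda_1$ with a known slope sign. The optimum is therefore either that endpoint or the crossing of a decreasing and an increasing line.
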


Combining this achievable rate with the converse gives the following exact characterization.

\begin{corollary}
For every $N\geq3$,
\begin{equation*}
R(\Ftwo,N,M)=
\begin{cases}
N-N(N-1)M &\textup{if}~0\leq M\leq\frac{1}{2N-1},\\
1 &\textup{if}~M\geq\frac{N-1}{N}.
\end{cases}
%\label{eq:exact-regimes}
\end{equation*}
\label{cor:exact-regimes}
\end{corollary}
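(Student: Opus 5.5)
The corollary combines the achievability of Theorem~\ref{thm:pap-minimax} (through the theorem stating $R(\Ftwo,N,M)\leq R_{\mathrm{PAP}}(N,M)$) with suitable instances of the converse in Theorem~\ref{thm:multi-query-converse} specialized to $q=2$. In each regime I will show that the upper and lower bounds coincide.

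For $0\leq M\leq \Msmall$, the first branch of Theorem~\ref{thm:pap-minimax} gives $R(\Ftwo,N,M)\leq N-N(N-1)M$, in both the odd and even cases. For the matching lower bound I take $m=n=1$ in Theorem~\ref{thm:multi-query-converse}. Since $\avgrank_{1,1,N}=1$, the bound reads
\[
R(\Ftwo,N,M)\geq \frac{N}{1\cdot 1}\bigl(1-(N-1)M\bigr)=N-N(N-1)M .
\]
To justify $\avgrank_{1,1,N}=1$: take $A=(1,1,\dots,1)$, so every single column has rank $1$; and no $1\times 1$ submatrix can have rank exceeding $1$.

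For $M\geq\Mfull$, the last branch of Theorem~\ref{thm:pap-minimax} gives $R(\Ftwo,N,M)\leq 1$. For the converse I take $m=n=N$. Then $\avgrank_{N,N,N}=N$ (attained by the identity matrix) and the cache term $(N-n)M$ vanishes, so
\[
R\geq \frac{N}{N^2}\cdot N=1 .
\]
This bound holds for every $M$, which is all we need. Alternatively, one can argue directly: for $\boldsymbol\alpha\neq\boldsymbol 0$ the receiver must output $\boldsymbol\alpha\mathbf F$, which is uniform over $\Ftwo^L$, so the total message alphabet has size at least $2^L$.

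Neither step presents a real obstacle once Theorems~\ref{thm:multi-query-converse} and~\ref{thm:pap-minimax} are available. The only points needing care are the evaluation of $\avgrank$ at the two extreme parameter choices, and checking that the first and last branches of Theorem~\ref{thm:pap-minimax} are identical for odd and even $N$, so that the corollary needs no parity split.
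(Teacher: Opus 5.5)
Your proposal is correct and follows essentially the paper's route. The paper also takes the upper bounds from the first and last branches of Theorem~\ref{thm:pap-minimax} via $R(\Ftwo,N,M)\le R_{\mathrm{PAP}}(N,M)$, and the matching lower bounds from the $m=n=1$ and $m=n=N$ instances of Theorem~\ref{thm:multi-query-converse}, using $\avgrank_{1,1,N}=1$ and $\avgrank_{N,N,N}=N$.
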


\begin{figure}[!t]
\centering
\includegraphics[width=0.94\columnwidth]{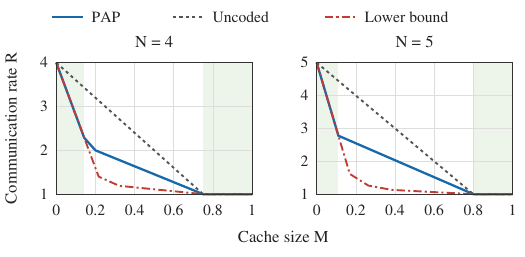}
\caption{Binary memory--rate tradeoffs for $N=4,5$. Shaded intervals mark the exact regimes in Corollary~\ref{cor:exact-regimes}. The uncoded baseline uses only full and uncached coordinates.}
\label{fig:pap-comparison}
\end{figure}

\section{Numerical Results and Discussion}
\label{sec:numerical}

Figure~\ref{fig:pap-comparison} compares the PAP rate in Theorem~\ref{thm:pap-minimax} with the converse in Theorem~\ref{thm:multi-query-converse} for $N=4,5$. The maximum average ranks are evaluated by enumerating binary query row spaces for each pair $(m,n)$. We also consider an uncoded baseline using only full and uncached coordinates, which achieves the worst-case rate $\max\{1,N(1-M)\}$.

For $0\leq M\leq1/(2N-1)$, PAP meets the converse and reduces the rate with increasing cache size at $N-1$ times the slope magnitude of the baseline. In this regime, full-support requests determine the worst-case rate. An ABO coordinate serves such a request with one transmitted bit using one cached parity and the holder's local symbol, whereas a full coordinate requires caching $N-1$ symbols to achieve the same delivery cost. For $M\geq(N-1)/N$, both schemes achieve the optimal rate of one.

For $N=4,5$, PAP improves upon the baseline at intermediate cache sizes but does not meet the lower bound. At $N=5$ and $M=0.5$, for example, the PAP rate is $1.774$, compared with $2.5$ for the baseline and a lower bound of $1.094$. The optimal memory--rate tradeoff in this intermediate regime remains open.

\section{Conclusion}

We formulated a cache-aided linear computation model in which each source retains its own file and caches additional information before the requested linear combination is known. We established a converse bound over arbitrary finite fields by considering multiple requests under the same cache placement. For binary files, we proposed parity-aided placement (PAP) and obtained a closed-form worst-case rate for every $N\geq3$ by optimizing its placement parameters. The resulting achievable rate meets the converse for $0\leq M\leq1/(2N-1)$ and $M\geq(N-1)/N$, establishing the optimal memory--rate tradeoff in these two regimes.
\clearpage
\section*{Acknowledgment}
The authors used ChatGPT to assist with language editing throughout the manuscript and to provide supplementary checks of mathematical arguments, particularly the proof of Theorem~\ref{thm:pap-minimax}. The research conception, model formulation, code constructions, proof development, and initial manuscript were the original work of the authors. The authors reviewed and revised the AI-assisted edits, independently verified the mathematical arguments and results, and take full responsibility for the final content.

\bibliographystyle{IEEEtran}
\bibliography{references}

@article{MaddahAliNiesen2014,
  author  = {M. A. Maddah-Ali and U. Niesen},
  title   = {Fundamental Limits of Caching},
  journal = {IEEE Transactions on Information Theory},
  volume  = {60},
  number  = {5},
  pages   = {2856--2867},
  month   = may,
  year    = {2014},
  doi     = {10.1109/TIT.2014.2306938}
}

@article{YuMaddahAliAvestimehr2018,
  author  = {Q. Yu and M. A. Maddah-Ali and A. S. Avestimehr},
  title   = {The Exact Rate-Memory Tradeoff for Caching With Uncoded Prefetching},
  journal = {IEEE Transactions on Information Theory},
  volume  = {64},
  number  = {2},
  pages   = {1281--1296},
  month   = feb,
  year    = {2018},
  doi     = {10.1109/TIT.2017.2785237}
}

@article{WanSunJiTuninettiCaire2021,
  author  = {K. Wan and H. Sun and M. Ji and D. Tuninetti and G. Caire},
  title   = {On the Optimal Load-Memory Tradeoff of Cache-Aided Scalar Linear Function Retrieval},
  journal = {IEEE Transactions on Information Theory},
  volume  = {67},
  number  = {6},
  pages   = {4001--4018},
  month   = jun,
  year    = {2021},
  doi     = {10.1109/TIT.2021.3066005}
}

@article{WanSunJiTuninettiCaire2021GLFR,
  author  = {K. Wan and H. Sun and M. Ji and D. Tuninetti and G. Caire},
  title   = {Cache-Aided General Linear Function Retrieval},
  journal = {Entropy},
  volume  = {23},
  number  = {1},
  pages   = {25},
  year    = {2021},
  doi     = {10.3390/e23010025}
}

@article{JiCaireMolisch2016,
  author  = {M. Ji and G. Caire and A. F. Molisch},
  title   = {Fundamental Limits of Caching in Wireless {D2D} Networks},
  journal = {IEEE Transactions on Information Theory},
  volume  = {62},
  number  = {2},
  pages   = {849--869},
  month   = feb,
  year    = {2016},
  doi     = {10.1109/TIT.2015.2504556}
}

@article{YuMaddahAliAvestimehr2019,
  author  = {Q. Yu and M. A. Maddah-Ali and A. S. Avestimehr},
  title   = {Characterizing the Rate-Memory Tradeoff in Cache Networks Within a Factor of 2},
  journal = {IEEE Transactions on Information Theory},
  volume  = {65},
  number  = {1},
  pages   = {647--663},
  month   = jan,
  year    = {2019},
  doi     = {10.1109/TIT.2018.2870566}
}

@article{YaparWanSchaeferCaire2019,
  author  = {{\c{C}}. Yapar and K. Wan and R. F. Schaefer and G. Caire},
  title   = {On the Optimality of {D2D} Coded Caching With Uncoded Cache Placement and One-Shot Delivery},
  journal = {IEEE Transactions on Communications},
  volume  = {67},
  number  = {12},
  pages   = {8179--8192},
  month   = dec,
  year    = {2019},
  doi     = {10.1109/TCOMM.2019.2940671}
}

@article{LiMaddahAliYuAvestimehr2018,
  author  = {S. Li and M. A. Maddah-Ali and Q. Yu and A. S. Avestimehr},
  title   = {A Fundamental Tradeoff Between Computation and Communication in Distributed Computing},
  journal = {IEEE Transactions on Information Theory},
  volume  = {64},
  number  = {1},
  pages   = {109--128},
  month   = jan,
  year    = {2018},
  doi     = {10.1109/TIT.2017.2756959}
}

@article{WanSunJiCaire2022,
  author  = {K. Wan and H. Sun and M. Ji and G. Caire},
  title   = {Distributed Linearly Separable Computation},
  journal = {IEEE Transactions on Information Theory},
  volume  = {68},
  number  = {2},
  pages   = {1259--1278},
  month   = feb,
  year    = {2022},
  doi     = {10.1109/TIT.2021.3127910}
}

@article{AppuswamyFranceschettiKaramchandaniZeger2011,
  author  = {R. Appuswamy and M. Franceschetti and N. Karamchandani and K. Zeger},
  title   = {Network Coding for Computing: Cut-Set Bounds},
  journal = {IEEE Transactions on Information Theory},
  volume  = {57},
  number  = {2},
  pages   = {1015--1030},
  month   = feb,
  year    = {2011},
  doi     = {10.1109/TIT.2010.2095070}
}

@article{AppuswamyFranceschetti2014,
  author  = {R. Appuswamy and M. Franceschetti},
  title   = {Computing Linear Functions by Linear Coding Over Networks},
  journal = {IEEE Transactions on Information Theory},
  volume  = {60},
  number  = {1},
  pages   = {422--431},
  month   = jan,
  year    = {2014},
  doi     = {10.1109/TIT.2013.2283075}
}

@article{TianChen2018,
  author  = {C. Tian and J. Chen},
  title   = {Caching and Delivery via Interference Elimination},
  journal = {IEEE Transactions on Information Theory},
  volume  = {64},
  number  = {3},
  pages   = {1548--1560},
  month   = mar,
  year    = {2018},
  doi     = {10.1109/TIT.2018.2794543}
}

@article{GuangYeungYangLi2019,
  author  = {X. Guang and R. W. Yeung and S. Yang and C. Li},
  title   = {Improved Upper Bound on the Network Function Computing Capacity},
  journal = {IEEE Transactions on Information Theory},
  volume  = {65},
  number  = {6},
  pages   = {3790--3811},
  month   = jun,
  year    = {2019},
  doi     = {10.1109/TIT.2019.2893107}
}

@article{GuangZhang2023,
  author  = {X. Guang and R. Zhang},
  title   = {Zero-Error Distributed Compression of Binary Arithmetic Sum},
  journal = {IEEE Transactions on Information Theory},
  volume  = {70},
  number  = {5},
  pages   = {3100--3117},
  month   = may,
  year    = {2024},
  doi     = {10.1109/TIT.2023.3319976}
}

@article{ValenteEramoLavacca2023,
  author  = {F. Valente and V. Eramo and F. G. Lavacca},
  title   = {Optimal bandwidth and computing resource allocation in low earth orbit satellite constellation for earth observation applications},
  journal = {Computer Networks},
  volume  = {232},
  pages   = {109849},
  year    = {2023},
  doi     = {10.1016/j.comnet.2023.109849}
}

@article{XingWangTengLi2022,
  author  = {N. Xing and Y. Wang and L. Teng and L. Li},
  title   = {Communication and trajectory design in {UAV}-enabled flying network},
  journal = {Digital Signal Processing},
  volume  = {126},
  pages   = {103489},
  year    = {2022},
  doi     = {10.1016/j.dsp.2022.103489}
}

\clearpage
\appendices

\section{Proof of the Converse}\label{app:converse-proof}
\begin{IEEEproof}
Consider an admissible $L$-dimensional code
$
\mbC=\big\{\varphi_i,\phi_{i,\boldsymbol{\alpha}},\psi_{\boldsymbol{\alpha}}:
i\in[N],~\boldsymbol{\alpha}\in\mathbb{F}_q^N\big\}
$
for the model $(\mathbb{F}_q,N,M)$.
We let $m,\,n$ be arbitrary positive integers with $1\leq m,\,n\leq N$. Further, we let $\boldsymbol{\alpha}_1,\boldsymbol{\alpha}_2,\cdots, \boldsymbol{\alpha}_{m}$
be $m$ arbitrary coefficient vectors.
We  consider the communication rate of the code~$\mbC$
\begin{align}
R(\mbC)&=\max_{\boldsymbol{\alpha} \in \mathbb{F}_q^N} R_{\boldsymbol{\alpha}}(\mbC) \nonumber\\
&\geq \frac{1}{m}\cdot \sum_{k=1}^{m} R_{\boldsymbol{\alpha}_k}(\mbC)\nonumber\\
&=\frac{1}{m}\cdot \sum_{k=1}^{m} \frac{1}{\,L\,}\cdot
\sum_{i=1}^N  \log|\Im\,\phi_{i,\boldsymbol{\alpha}_k}|\nonumber\\
&\geq
\frac{1}{mL}\cdot\sum_{k=1}^{m} \sum_{i=1}^N H\Big(\phi_{i,\boldsymbol{\alpha}_k}\big( \boldsymbol{F}_i ,\,\varphi_i(\mathbf{F})\big)\Big),\label{pf-thm-lower-bound-eq4}
\end{align}
where $H(\,\cdot\,)$ denotes the entropy function computed with logarithm base $q$.

For each $i\in [N]$ and $k\in[m]$, let
$X_{i,\boldsymbol{\alpha}_k}$
be the encoded message from the source node $\sigma_i$ to the receiver $\rho$ associated with the coefficient vector $\boldsymbol{\alpha}_k$, i.e.,
\begin{align}
X_{i,\boldsymbol{\alpha}_k}\triangleq \phi_{i,\boldsymbol{\alpha}_k}\big( \boldsymbol{F}_i ,\,\varphi_i(\mathbf{F})\big).\label{pf-thm-lower-bound-eq1}
\end{align}
With this, we further let
\begin{align}
\boldsymbol{X}_i
\triangleq
\big(X_{i,\boldsymbol{\alpha}_1},X_{i,\boldsymbol{\alpha}_2},\cdots,X_{i,\boldsymbol{\alpha}_{m}}\big),\qquad\forall~i\in [N].\label{boldsymbol-X_i}
\end{align}
Continuing from \eqref{pf-thm-lower-bound-eq4}, we have
\begin{align}
R(\mbC)
&\geq
\frac{1}{mL}\cdot
\sum_{i=1}^N \sum_{k=1}^{m} H(X_{i,\boldsymbol{\alpha}_k}) \geq
\frac{1}{mL}\cdot
\sum_{i=1}^N H(\boldsymbol{X}_i),
\label{eq:multi-query-proof-upper-final}
\end{align}
where the last inequality in \eqref{eq:multi-query-proof-upper-final} follows from the subadditivity of entropy.

Next, we lower bound $\sum_{i=1}^N H(\boldsymbol{X}_i).$
For notational simplicity,
we let
\begin{align*}
\boldsymbol{X}_{\theta}
\triangleq
(\boldsymbol{X}_i:\;i\in \theta),\qquad\forall~\theta\subseteq[N].
\end{align*}
In particular, we let $\mathbf{X}\triangleq
(\boldsymbol{X}_i:\;i\in [N])$.
Then, we have
\begin{align}
\sum_{i=1}^N H(\boldsymbol{X}_i)
&=\frac{1}{\binom{N-1}{n-1}}
\cdot
\sum_{\theta\subseteq [N]\;\text{s.t.}\;|\theta|=n}~ \sum_{i\in\theta} ~ H(\boldsymbol{X}_i)\label{pf-thm-lower-bound-eq9-0}\\
&\geq\frac{1}{\binom{N-1}{n-1}}
\cdot
\sum_{\theta\subseteq [N]\;\text{s.t.}\;|\theta|=n}~ H(\boldsymbol{X}_\theta),
\label{pf-thm-lower-bound-eq9}
\end{align}
where the equality \eqref{pf-thm-lower-bound-eq9-0} follows because
for any given $i\in[N]$, the number of subsets
$\theta\subseteq[N]$ satisfying $|\theta|=n$  and $i\in\theta$   is equal to $\binom{N-1}{n-1}$, and
the  inequality \eqref{pf-thm-lower-bound-eq9} follows from the subadditivity of entropy.

Furthermore,
we  consider $H(\boldsymbol{X}_{\theta})$ for a subset $\theta\subseteq[N]$ with $|\theta|=n$.
For the  file matrix $\mathbf{F}\in\mathbb{F}_q^{N\times L}$, we let
$\boldsymbol{F}_{\theta}$ be the submatrix of $\mathbf{F}$ containing the $i$th row if $i\in \theta$, and let $\boldsymbol{F}_{\theta^{\textup{c}}}$ be the submatrix of $\mathbf{F}$ containing the $i$th row if $i\in \theta^{\textup{c}}\triangleq [N]\setminus\theta$.
Then,
\begin{align}
 H(\boldsymbol{X}_\theta)
\geq
H(\boldsymbol{X}_\theta\mid \boldsymbol{F}_{\theta^{\textup{c}}}) =
H(\mathbf{X}\mid \boldsymbol{F}_{\theta^{\textup{c}}})
-
H(\boldsymbol{X}_{\theta^{\textup{c}}}
\mid
\boldsymbol{X}_\theta,\boldsymbol{F}_{\theta^{\textup{c}}}),
\label{eq:multi-query-proof-chain}
\end{align}
where   the
equality in \eqref{eq:multi-query-proof-chain} follows from the chain rule of entropy.

We first consider
$H(\mathbf{X}\mid \boldsymbol{F}_{\theta^{\textup{c}}})$
in \eqref{eq:multi-query-proof-chain}. Note that for each  coefficient vector
$\boldsymbol{\alpha}_k$ with $k\in[m]$, the code~$\mbC$ can compute the linear function
$ \boldsymbol{\alpha}_k\mathbf{F} $ with zero error, i.e.,
\begin{align}
&\psi_{\boldsymbol{\alpha}_k}\big(X_{i,\boldsymbol{\alpha}_k},\,i\in[N]\big)\nonumber\\
&=\psi_{\boldsymbol{\alpha}_k}\Big(\phi_{i,\boldsymbol{\alpha}_k}(\boldsymbol F_i,\varphi_i(\mathbf F)):i\in[N]\Big)\nonumber\\ &=\boldsymbol\alpha_k\mathbf F,\quad\forall k\in[m],\label{pf-thm-lower-bound-eq11}
\end{align}
where the first equality in \eqref{pf-thm-lower-bound-eq11} follows from \eqref{pf-thm-lower-bound-eq1}.
We let
\begin{align}
A\triangleq
\setlength{\arraycolsep}{3pt}
\renewcommand{\arraystretch}{0.7}
\begin{bmatrix}
\boldsymbol{\alpha}_1\\
\boldsymbol{\alpha}_2\\
\vdots\\
\boldsymbol{\alpha}_{m}
\end{bmatrix}
\in\mathbb{F}_q^{m\times N}.\label{pf-def-A}
\end{align}
By \eqref{pf-thm-lower-bound-eq11}, we can see that the matrix $A \mathbf{F}$ is a
function of
\begin{align*}
\mathbf{X}=(\boldsymbol{X}_1,\boldsymbol{X}_2,\cdots,\boldsymbol{X}_N)=\big( (X_{i,\boldsymbol{\alpha}_k}:\;k\in[m] ):\;i\in[N]\big).
\end{align*}
With this, we further have
\begin{align}
H (\mathbf{X}\, | \,\boldsymbol{F}_{\theta^{\textup{c}}} )
&\geq H(A \mathbf{F}\,| \,\boldsymbol{F}_{\theta^{\textup{c}}})
\nonumber\\&=H\big(A[\theta]\cdot\boldsymbol{F}_{\theta}+ A[\theta^{\textup{c}}]\cdot\boldsymbol{F}_{\theta^{\textup{c}}}\,\big| \,\boldsymbol{F}_{\theta^{\textup{c}}}\big)\nonumber\\
&=H\big(A[\theta]\cdot\boldsymbol{F}_{\theta}\big| \,\boldsymbol{F}_{\theta^{\textup{c}}}\big)\nonumber\\
&=
H(A[\theta]\cdot\boldsymbol{F}_{\theta})\label{pf-thm-lower-bound-eq12-00}\\
&=
L\cdot \mathrm{rank}(A[\theta]),\label{pf-thm-lower-bound-eq12-0}
\end{align}
where the two equalities \eqref{pf-thm-lower-bound-eq12-00} and \eqref{pf-thm-lower-bound-eq12-0} follow  because
the files $\boldsymbol{F}_1,\boldsymbol{F}_2,\cdots,\boldsymbol{F}_N$ are mutually independent and
uniformly distributed over $\mathbb{F}_q^L$.

On the other hand, we consider $H(\boldsymbol{X}_{\theta^{\textup{c}}}
\mid
\boldsymbol{X}_\theta,\boldsymbol{F}_{\theta^{\textup{c}}})$ in \eqref{eq:multi-query-proof-chain}. To be specific, it is easy to see that
\begin{align}
&H(\boldsymbol{X}_{\theta^{\textup{c}}}
\mid
\boldsymbol{X}_\theta,\boldsymbol{F}_{\theta^{\textup{c}}})\nonumber\\&\leq
H(\boldsymbol{X}_{\theta^{\textup{c}}}\mid\boldsymbol{F}_{\theta^{\textup{c}}})\nonumber\\&\leq
\sum_{i\in\theta^{\textup{c}}}
H(\boldsymbol X_i\mid\boldsymbol F_{\theta^{\textup c}})\nonumber\\&\leq
\sum_{i\in\theta^{\textup{c}}}
H(\boldsymbol{X}_i\mid \boldsymbol{F}_i).
\label{eq:multi-query-proof-second-term-1}
\end{align}
We further note that
for each $i\in\theta^c$, the random vector $\boldsymbol{X}_i=\big(X_{i,\boldsymbol{\alpha}_1},X_{i,\boldsymbol{\alpha}_2},\cdots,X_{i,\boldsymbol{\alpha}_{m}}\big)$ is a deterministic function of
$
\big(\boldsymbol{F}_i,\varphi_i(\mathbf{F})\big)
$ (cf.~\eqref{pf-thm-lower-bound-eq1}), which implies that $H(\boldsymbol{X}_i\mid \boldsymbol{F}_i)
 \leq
H\big(\varphi_i(\mathbf{F})\mid \boldsymbol{F}_i\big) $.
Continuing from~\eqref{eq:multi-query-proof-second-term-1}, we obtain that
\begin{align}
H(\boldsymbol{X}_{\theta^{\textup{c}}}
\mid
\boldsymbol{X}_\theta,\boldsymbol{F}_{\theta^{\textup{c}}})
&\leq
\sum_{i\in\theta^{\textup{c}}}
H(\boldsymbol{X}_i \; |\; \boldsymbol{F}_i) \nonumber\\
&\leq
\sum_{i\in\theta^{\textup{c}}}
H\big(\varphi_i(\mathbf{F}) \;\big|\; \boldsymbol{F}_i\big) \nonumber\\
&\leq
\sum_{i\in\theta^{\textup{c}}}
H\big(\varphi_i(\mathbf{F}) \big) \nonumber\\
&\leq \sum_{i\in\theta^{\textup{c}}} ML=(N-n)ML,\label{eq:multi-query-proof-cache}
\end{align}
where the inequality \eqref{eq:multi-query-proof-cache} follows because
$H\big(\varphi_i(\mathbf{F}) \big)\leq \log|\Im\,\varphi_i|\leq ML$ for each $i\in [N]$.

For each    subset $\theta\subseteq[N]$ with $|\theta|=n$,
by \eqref{eq:multi-query-proof-chain},
\eqref{pf-thm-lower-bound-eq12-0}, and
\eqref{eq:multi-query-proof-cache}, we obtain
\begin{align}
 H(\boldsymbol{X}_\theta)
\geq
L\cdot \big(\mathrm{rank} (A[\theta] )
-
(N-n)M\big).
\label{eq:multi-query-proof-subset-lower}
\end{align}
Combining \eqref{eq:multi-query-proof-subset-lower}
with \eqref{pf-thm-lower-bound-eq9}, we have
\begin{align}
&\sum_{i=1}^N H(\boldsymbol{X}_i)\nonumber\\
&\geq
\frac{1}{\binom{N-1}{n-1}}
\sum_{\theta\subseteq[N]:\,|\theta|=n}
L\cdot \big(
\mathrm{rank} (A[\theta] )
-
(N-n)M
\big) \nonumber\\
&=
\frac{L}{\binom{N-1}{n-1}}\cdot\binom{N}{n}\cdot
\left[\begin{aligned}
\frac{1}{\binom{N}{n}}
\sum_{\theta\subseteq[N]:\,|\theta|=n}
  \mathrm{rank} (A[\theta] )
\\{}-\frac{1}{\binom Nn}\sum_{\theta\subseteq[N]:|\theta|=n}
(N-n)M
\end{aligned}\right]\nonumber\\
&=
\frac{LN}{n}\cdot
\Big[ \avgrank_{n}(A)
-
(N-n)M
\Big],
\label{eq:multi-query-proof-lower-final}
\end{align}
where the equality
\eqref{eq:multi-query-proof-lower-final}
follows from $\binom{N}{n}/\binom{N-1}{n-1}=N/n$, the definition of
$\avgrank_{n}(A)$ (cf.~\eqref{eq:average-rank}), and
\[\begin{aligned}&\frac{1}{\binom Nn}\sum_{\theta\subseteq[N]:|\theta|=n}
(N-n)M\\&=\frac{1}{\binom Nn}\binom Nn(N-n)M
\\&=(N-n)M.
\end{aligned}\]

Continuing from  \eqref{eq:multi-query-proof-upper-final}, by
\eqref{eq:multi-query-proof-lower-final} we have
\begin{align}
R(\mbC)
&\geq
\frac{N}{mn}\cdot
\big( \avgrank_{n}(A)
-
(N-n)M
\big).
\label{eq:multi-query-proof-code-lb-A}
\end{align}
We note that the inequality \eqref{eq:multi-query-proof-code-lb-A} holds for an arbitrary matrix $A\in\mathbb{F}_q^{m\times N}$, implying that
\begin{align}
R(\mbC)
&\geq \max_{\text{all matrices $A\in\mathbb{F}_q^{m\times N}$}}\;\frac{\;N\;}{mn}\cdot
\big( \avgrank_{n}(A)
-
(N-n)M
\big)
\nonumber\\
&=
\frac{\;N\;}{mn}\cdot\Big(  \avgrank_{m,n,N}
-  (N-n)M\Big).
\label{eq:multi-query-proof-code-lb}
\end{align}
Furthermore, the inequality \eqref{eq:multi-query-proof-code-lb} is true for arbitrary positive integers $1\leq m,\,n\leq N$. As such, we have
\begin{align}
R(\mbC)
\ge\max_{1\leq m,\,n\leq N}\;
\frac{\;N\;}{mn}\cdot\Big(  \avgrank_{m,n,N}
-  (N-n)M\Big).\label{eq-pof}
\end{align}

We finally note that   the above lower bound \eqref{eq-pof} is valid for any positive integer $L$
and  any  admissible $L$-dimensional code
$
\mbC $
for the model $(\mathbb{F}_q,N,M)$.
The theorem is proved.
\end{IEEEproof}
\section{Detailed Derivation of the PAP Minimax Solution}\label{app:pap-proof}
For $N=2$, ABO placement coincides with full placement because the cached parity contains only the other source's symbol. Combining uncached and full-placement coordinates with full-placement fraction $\lambda_1=\min\{2M,1\}$ achieves worst-case rate $2-\lambda_1$. The $m=n=1$ and $m=n=2$ instances of~\eqref{eq:main-multi-query} give the matching lower bound, so $R(\Ftwo,2,M)=R_{\mathrm{PAP}}(2,M)=\max\{2-2M,1\}$ for all $M\geq0$.

\begin{proof}[Proof of Theorem~\ref{thm:pap-minimax}]
For a nonzero request of weight $w$, write
\begin{equation*}
r_{\mathrm{PAP}}(w;\lambda_1,\lambda_2)
\triangleq (1-\lambda_1-\lambda_2)w+\lambda_1+\lambda_2g_N(w).
\end{equation*}
This is the per-request rate in~\eqref{PAP-R-alpha}, with $\lambda_0=1-\lambda_1-\lambda_2$.

For $M\ge (N-1)/N$, choosing $(\lambda_1,\lambda_2)=(1,0)$ makes every coordinate a
full coordinate and achieves rate $1$. Moreover,
$r_{\mathrm{PAP}}(1;\lambda_1,\lambda_2)=1$ for every feasible pair
$(\lambda_1,\lambda_2)$, so no PAP code can have worst-case rate below $1$.

It remains to consider $0\leq M\leq(N-1)/N$. Over this range, we first reduce the optimization to one parameter. The three coordinate types are ordered by their per-source cache costs: an uncached coordinate requires no cache, a parity coordinate requires $1/N$, and a full coordinate requires $(N-1)/N$. Moving upward in this order never increases any per-request cost: replacing an uncached coordinate by a parity coordinate changes its cost from $w$ to $g_N(w)\leq w$, while replacing a parity coordinate by a full coordinate changes its cost from $g_N(w)\geq1$ to $1$.

Consequently, any feasible pair with $\frac{(N-1)\lambda_1+\lambda_2}{N}<M$ can be modified to increase its cache usage without increasing the value of the objective function, until the cache constraint is met with equality. It is therefore sufficient to optimize over pairs satisfying $\frac{(N-1)\lambda_1+\lambda_2}{N}=M$. By \eqref{lambda-cond},
\begin{equation}
\label{eq:PAP-active-memory}
(N-1)\lambda_1+\lambda_2=NM.
\end{equation}

We eliminate $\lambda_2=NM-(N-1)\lambda_1$ and write
\[
r_{\mathrm{PAP}}(w;\lambda_1)
\triangleq
r_{\mathrm{PAP}}\bigl(w;\lambda_1,NM-(N-1)\lambda_1\bigr).
\]
After this substitution, the nonnegativity constraints $\lambda_1\geq0$, $\lambda_2\geq0$, and $1-\lambda_1-\lambda_2\geq0$ reduce to the feasible interval
\begin{equation}
\label{eq:PAP-theta-feasible-interval}
\max\left\{0,\frac{NM-1}{N-2}\right\}
\le \lambda_1 \le \frac{NM}{N-1}.
\end{equation}

The resulting one-dimensional optimization is
\[
\min_{\lambda_1}\ \max_{w\in[N]}r_{\mathrm{PAP}}(w;\lambda_1),
\]
where $\lambda_1$ ranges over the interval above. It remains to characterize and minimize the upper envelope of the functions $r_{\mathrm{PAP}}(w;\lambda_1)$.

We next determine which request sizes $w\in[N]$ can attain $\max_{u\in[N]}r_{\mathrm{PAP}}(u;\lambda_1)$. For a fixed $w$, substituting $\lambda_2=NM-(N-1)\lambda_1$ into \eqref{PAP-R-alpha} gives
\[
\begin{aligned}
r_{\mathrm{PAP}}(w;\lambda_1)
&=\lambda_1+[NM-(N-1)\lambda_1]g_N(w)
\\&\quad+[1-NM+(N-2)\lambda_1]w\\
&=\bigl[w+NM(g_N(w)-w)\bigr]
\\&\quad+\lambda_1\bigl[1-(N-1)g_N(w)+(N-2)w\bigr].
\end{aligned}
\]
Thus, each request size defines an affine function of $\lambda_1$, and the worst-case rate is their upper envelope. To determine where the maximum over $w$, i.e., $\max_{w\in[N]}r_{\mathrm{PAP}}(w;\lambda_1)$, is attained, we separately consider $w\leq N/2$, for which $g_N(w)=w$, and $w>N/2$, for which $g_N(w)=N-w+w/N$.

\par\vspace{1mm}
\noindent\textbf{\textit{Requests with $w\leq N/2$.}}\par
In this regime,
\begin{equation}
\label{eq:PAP-low-support-affine-theta}
\begin{aligned}r_{\mathrm{PAP}}(w;\lambda_1)
&=\lambda_1+[NM-(N-1)\lambda_1]w\\&\quad+[1-NM+(N-2)\lambda_1]w
\\&=w+\lambda_1(1-w)
\\&=\lambda_1+(1-\lambda_1)w.\end{aligned}
\end{equation}
For $w\geq2$, this rate is strictly decreasing in $\lambda_1$. Moreover, $\lambda_1\leq1$ throughout the feasible interval, so the last expression is non-decreasing in $w$. Hence, the maximum over $w\leq N/2$ is attained at
\[
w_L
\triangleq
\left\lfloor\frac N2\right\rfloor.
\]

\par\vspace{1mm}
\noindent\textbf{\textit{Requests with $w>N/2$.}}\par
In this regime, a direct computation gives
\[
\begin{aligned}r_{\mathrm{PAP}}(w;\lambda_1)
&=w+M[N^2-w(2N-1)]\\
&\quad+\lambda_1\Bigl[1-N^2+N+\tfrac{w(2N^2-4N+1)}N\Bigr].\end{aligned}
\]
To identify the maximum over $w>N/2$, collect the terms that are independent of $w$ and proportional to $w$, respectively:
\begin{equation}
\label{eq:PAP-high-support-affine-s}
\begin{aligned}
r_{\mathrm{PAP}}(w;\lambda_1)
&=N^2M+\lambda_1(1-N^2+N)\\
&\quad
+w\left[
1-(2N-1)M
+\frac{2N^2-4N+1}{N}\lambda_1
\right].
\end{aligned}
\end{equation}
Define
\begin{equation}
\label{eq:PAP-AG-definitions}
\begin{aligned}
A(\lambda_1)
&\triangleq
N^2M+\lambda_1(1-N^2+N),\\
G(\lambda_1)
&\triangleq
1-(2N-1)M+\frac{2N^2-4N+1}{N}\lambda_1.
\end{aligned}
\end{equation}
Then \eqref{eq:PAP-high-support-affine-s} becomes
\[
r_{\mathrm{PAP}}(w;\lambda_1)
=
A(\lambda_1)+wG(\lambda_1).
\]
For $w>N/2$, the smallest admissible integer request size is
\[
w_H
\triangleq
\left\lfloor\frac N2\right\rfloor+1.
\]
By \eqref{eq:PAP-high-support-affine-s}, $r_{\mathrm{PAP}}(w;\lambda_1)$ is affine in $w$ over the integers from $w_H$ to $N$. Hence,
\[
\max_{w>N/2}r_{\mathrm{PAP}}(w;\lambda_1)
=
\max\left\{
r_{\mathrm{PAP}}(w_H;\lambda_1),
r_{\mathrm{PAP}}(N;\lambda_1)
\right\}.
\]
\par\vspace{1mm}
Combining this identity with the maximum over $w\leq N/2$ gives
\begin{equation}
\label{eq:PAP-three-candidate-envelope}
\begin{aligned}\max_{w\in[N]}r_{\mathrm{PAP}}(w;\lambda_1)
&=
\max\bigl\{
r_{\mathrm{PAP}}(w_L;\lambda_1),\\&\quad
r_{\mathrm{PAP}}(w_H;\lambda_1),
r_{\mathrm{PAP}}(N;\lambda_1)
\bigr\}.\end{aligned}
\end{equation}

\par\vspace{1mm}
Having reduced the worst-case rate to three affine functions, we now minimize their upper envelope by comparing how these functions vary with $\lambda_1$. For $w>N/2$, define
\[
\Gamma(w)
\triangleq
1-N^2+N+\frac{w(2N^2-4N+1)}{N},
\]
so that $\Gamma(w)$ is the slope of $r_{\mathrm{PAP}}(w;\lambda_1)$ with respect to $\lambda_1$.

From~\eqref{eq:PAP-low-support-affine-theta} and~\eqref{eq:PAP-high-support-affine-s}, the slopes of $r_{\mathrm{PAP}}(w_L;\lambda_1)$, $r_{\mathrm{PAP}}(w_H;\lambda_1)$, and $r_{\mathrm{PAP}}(N;\lambda_1)$ with respect to $\lambda_1$ are, respectively,
\[
1-w_L,\qquad \Gamma(w_H),\qquad \Gamma(N).
\]
For $N\geq3$,
\[
1-w_L\leq0,\qquad \Gamma(N)=(N-1)(N-2)>0.
\]
Thus, $r_{\mathrm{PAP}}(w_L;\lambda_1)$ is non-increasing and $r_{\mathrm{PAP}}(N;\lambda_1)$ is strictly increasing in $\lambda_1$.

The sign of $\Gamma(w_H)$ depends on whether $N$ is odd or even, so we consider these two cases separately.

\par\medskip
\noindent\underline{\textit{Odd $N$.}}\par\nopagebreak[4]

For odd $N$,
\[
w_L=\frac{N-1}{2},
\qquad
w_H=\frac{N+1}{2}.
\]
The preceding envelope reduction leaves $w_L$, $w_H$, and $N$ as the only candidate request sizes. We first show that the rate at $w_L$ never exceeds that at $w_H$. Indeed,
\[
g_N(w_L)=w_L,
\qquad
g_N(w_H)
=w_L+\frac{N+1}{2N}.
\]
Therefore, \eqref{PAP-R-alpha} gives, for every feasible pair $(\lambda_1,\lambda_2)$,
\[
\begin{aligned}
&r_{\mathrm{PAP}}(w_H;\lambda_1,\lambda_2)
-r_{\mathrm{PAP}}(w_L;\lambda_1,\lambda_2)\\
&\qquad=
\lambda_2\bigl(
g_N(w_H)-g_N(w_L)
\bigr)
+(1-\lambda_1-\lambda_2)(w_H-w_L)\\
&\qquad=
\lambda_2\frac{N+1}{2N}
+(1-\lambda_1-\lambda_2)
\geq0.
\end{aligned}
\]
Hence, after eliminating $\lambda_2$,
\[
\max_{w\in[N]}r_{\mathrm{PAP}}(w;\lambda_1)
=
\max\left\{
r_{\mathrm{PAP}}(w_H;\lambda_1),
r_{\mathrm{PAP}}(N;\lambda_1)
\right\}.
\]

The slopes of $r_{\mathrm{PAP}}(w_H;\lambda_1)$ and $r_{\mathrm{PAP}}(N;\lambda_1)$ with respect to $\lambda_1$ are
\[
\Gamma(w_H)
=
-\frac{N-1}{2N}
<0
\]
and
\[
\Gamma(N)
=
(N-1)(N-2)
>0,
\]
respectively. Thus, $r_{\mathrm{PAP}}(w_H;\lambda_1)$ is strictly decreasing in $\lambda_1$, whereas $r_{\mathrm{PAP}}(N;\lambda_1)$ is strictly increasing. Their unique intersection is equivalently characterized by $G(\lambda_1)=0$ and occurs at
\begin{equation}
\label{eq:PAP-odd-balancing-point}
\lambda_{1,\mathrm{bal}}
\triangleq
\frac{N\bigl((2N-1)M-1\bigr)}
{2N^2-4N+1}.
\end{equation}
The maximum of these two rates is minimized at $\lambda_{1,\mathrm{bal}}$ when the intersection lies in the feasible interval. If the intersection lies to the left or right of this interval, the minimum is attained at the corresponding endpoint.

\medskip
\noindent\textit{Case O1: $0\leq M\leq1/(2N-1)$.}

By \eqref{eq:PAP-theta-feasible-interval}, the lower endpoint of the feasible interval for $\lambda_1$ is
\[
\max\left\{
0,\frac{NM-1}{N-2}
\right\}.
\]
Since
\[
M\leq\frac{1}{2N-1}<\frac{1}{N},
\]
we have $NM-1<0$, so this lower endpoint equals $0$. Moreover, \eqref{eq:PAP-odd-balancing-point} gives
\[
\lambda_{1,\mathrm{bal}}\leq0.
\]
The intersection therefore lies at or to the left of the feasible interval.
For every feasible $\lambda_1$, \eqref{eq:PAP-theta-feasible-interval} gives
\[
\lambda_1\geq0\geq\lambda_{1,\mathrm{bal}}.
\]
Since $r_{\mathrm{PAP}}(N;\lambda_1)$ is increasing, $r_{\mathrm{PAP}}(w_H;\lambda_1)$ is decreasing, and the two rates are equal at $\lambda_{1,\mathrm{bal}}$, we have
\[
\begin{aligned}
r_{\mathrm{PAP}}(N;\lambda_1)
&\geq
r_{\mathrm{PAP}}(N;\lambda_{1,\mathrm{bal}})\\
&=
r_{\mathrm{PAP}}(w_H;\lambda_{1,\mathrm{bal}})\\
&\geq
r_{\mathrm{PAP}}(w_H;\lambda_1).
\end{aligned}
\]
Thus, $r_{\mathrm{PAP}}(N;\lambda_1)$ determines the upper envelope throughout the feasible interval. Since this rate is strictly increasing, the upper envelope is minimized at $\lambda_1^\star=0$. Substituting $\lambda_1^\star=0$ into \eqref{eq:PAP-active-memory} gives $\lambda_2^\star=NM$, and
\[
R_{\mathrm{PAP}}(N,M)
=
r_{\mathrm{PAP}}(N;0)
=
N-N(N-1)M.
\]

\medskip
\noindent\textit{Case O2: $1/(2N-1)<M\leq(N-1)/N$.}\par\nopagebreak[4]

In this range, \eqref{eq:PAP-odd-balancing-point} gives $\lambda_{1,\mathrm{bal}}>0$. By \eqref{eq:PAP-theta-feasible-interval}, the feasible interval is
\[
\max\left\{
0,\frac{NM-1}{N-2}
\right\}
\leq\lambda_1\leq
\frac{NM}{N-1}.
\]
Its upper endpoint is $NM/(N-1)$. The difference between this endpoint and $\lambda_{1,\mathrm{bal}}$ is
\[
\frac{NM}{N-1}-\lambda_{1,\mathrm{bal}}
=
\frac{N(N-1-NM)}
{(N-1)(2N^2-4N+1)}
\geq0.
\]
Hence, $\lambda_{1,\mathrm{bal}}$ does not exceed the upper endpoint.

It remains to compare $\lambda_{1,\mathrm{bal}}$ with the lower endpoint. If $M\leq1/N$, then $NM-1\leq0$, so the lower endpoint is $0$ and is strictly smaller than $\lambda_{1,\mathrm{bal}}$. If $M>1/N$, then the lower endpoint is $(NM-1)/(N-2)$, and
\[
\lambda_{1,\mathrm{bal}}
-\frac{NM-1}{N-2}
=
\frac{(N-1)(N-1-NM)}
{(N-2)(2N^2-4N+1)}
\geq0.
\]
Since $\lambda_{1,\mathrm{bal}}$ lies in the feasible interval throughout Case~O2, the upper envelope is minimized at
\[
\lambda_1^\star
=
\lambda_{1,\mathrm{bal}}
=
\frac{N\bigl((2N-1)M-1\bigr)}
{2N^2-4N+1}.
\]
Substituting this value into \eqref{eq:PAP-active-memory} gives
\[
\lambda_2^\star
=
\frac{N\bigl(N-1-NM\bigr)}
{2N^2-4N+1}.
\]
Therefore, the minimized worst-case rate is
\[
\begin{aligned}
R_{\mathrm{PAP}}(N,M)
&=
r_{\mathrm{PAP}}(w_H;\lambda_1^\star)\\
&=
r_{\mathrm{PAP}}(N;\lambda_1^\star)\\
&=
1+
\frac{(N-1)^2}
{2N^2-4N+1}
\bigl(N-1-NM\bigr).
\end{aligned}
\]
At $M=(N-1)/N$, the optimal pair becomes $(\lambda_1^\star,\lambda_2^\star)=(1,0)$ and the rate becomes $1$, consistent with the full-coordinate regime.

\par\medskip
\noindent\underline{\textit{Even $N$.}}\par\nopagebreak[4]

For even $N$, assume $N\geq4$. Then
\[
w_L=\frac{N}{2},
\qquad
w_H=\frac{N}{2}+1.
\]

By \eqref{eq:PAP-three-candidate-envelope}, only the rates corresponding to $w_L$, $w_H$, and $N$ can determine the upper envelope. Their dependence on $\lambda_1$ is particularly simple. For $w_L=N/2$,
\[
r_{\mathrm{PAP}}(w_L;\lambda_1)
=
\frac{N}{2}-\frac{N-2}{2}\lambda_1,
\]
which is strictly decreasing in $\lambda_1$. For the other two request sizes,
\[
\begin{gathered}\Gamma(w_H)
=
\frac{(2N-1)(N-2)}{2N}
>0,
\\
\Gamma(N)
=
(N-1)(N-2)
>0.\end{gathered}
\]
Hence, both $r_{\mathrm{PAP}}(w_H;\lambda_1)$ and $r_{\mathrm{PAP}}(N;\lambda_1)$ are strictly increasing in $\lambda_1$.

We first determine when $\lambda_1=0$ minimizes the upper envelope. By~\eqref{eq:PAP-theta-feasible-interval}, this choice is feasible precisely when $M\leq1/N$.

If one of the two increasing rates attains the maximum at $\lambda_1=0$, then increasing $\lambda_1$ increases that rate and cannot reduce the upper envelope. We therefore compare the three rates at $\lambda_1=0$.

Substituting $\lambda_1=0$ and $\lambda_2=NM$ into \eqref{PAP-R-alpha} gives
\begin{equation}
\label{eq:PAP-even-boundary-rates}
\begin{aligned}
r_{\mathrm{PAP}}(w_L;0)
&=
\frac{N}{2},\\
r_{\mathrm{PAP}}(w_H;0)
&=
\frac{N}{2}+1-\frac{3N-2}{2}M,\\
r_{\mathrm{PAP}}(N;0)
&=
N-N(N-1)M.
\end{aligned}
\end{equation}
To identify the largest of these rates, we compute
\begin{equation}
\label{eq:PAP-even-boundary-differences}
\begin{aligned}
r_{\mathrm{PAP}}(N;0)-r_{\mathrm{PAP}}(w_H;0)
&=
\frac{N-2}{2}\bigl[1-(2N-1)M\bigr],\\
r_{\mathrm{PAP}}(w_H;0)-r_{\mathrm{PAP}}(w_L;0)
&=
1-\frac{3N-2}{2}M.
\end{aligned}
\end{equation}
The two differences change sign at $M=1/(2N-1)$ and $M=2/(3N-2)$, respectively. Since $1/(2N-1)<2/(3N-2)<1/N$, both transitions occur while $\lambda_1=0$ is feasible. These two thresholds give the following three ranges of $M$.

\medskip
\noindent\textit{Case E1: $0\leq M\leq1/(2N-1)$.}

Both differences in \eqref{eq:PAP-even-boundary-differences} are nonnegative, so \eqref{eq:PAP-even-boundary-rates} gives
\[
r_{\mathrm{PAP}}(N;0)
\geq
r_{\mathrm{PAP}}(w_H;0)
\geq
r_{\mathrm{PAP}}(w_L;0).
\]
Thus, the upper envelope at $\lambda_1=0$ equals $r_{\mathrm{PAP}}(N;0)$. For every feasible $\lambda_1\geq0$, the monotonicity of $r_{\mathrm{PAP}}(N;\lambda_1)$ gives
\[
\max_{w\in[N]}r_{\mathrm{PAP}}(w;\lambda_1)
\geq
r_{\mathrm{PAP}}(N;\lambda_1)
\geq
r_{\mathrm{PAP}}(N;0).
\]
Therefore, the upper envelope is minimized at $\lambda_1^\star=0$. Equation~\eqref{eq:PAP-active-memory} then gives $\lambda_2^\star=NM$, and
\[
R_{\mathrm{PAP}}(N,M)
=
r_{\mathrm{PAP}}(N;0)
=
N-N(N-1)M.
\]

\medskip
\noindent\textit{Case E2: $1/(2N-1)<M\leq2/(3N-2)$.}

The first difference in \eqref{eq:PAP-even-boundary-differences} is negative, whereas the second is nonnegative. Hence,
\[
r_{\mathrm{PAP}}(w_H;0)
\geq
\max\left\{
r_{\mathrm{PAP}}(w_L;0),
r_{\mathrm{PAP}}(N;0)
\right\},
\]
so the upper envelope at $\lambda_1=0$ equals $r_{\mathrm{PAP}}(w_H;0)$. For every feasible $\lambda_1\geq0$, the monotonicity of $r_{\mathrm{PAP}}(w_H;\lambda_1)$ gives
\[
\max_{w\in[N]}r_{\mathrm{PAP}}(w;\lambda_1)
\geq
r_{\mathrm{PAP}}(w_H;\lambda_1)
\geq
r_{\mathrm{PAP}}(w_H;0).
\]
Therefore, the upper envelope is again minimized at $\lambda_1^\star=0$. By \eqref{eq:PAP-active-memory}, $\lambda_2^\star=NM$, and
\[
R_{\mathrm{PAP}}(N,M)
=
r_{\mathrm{PAP}}(w_H;0)
=
\frac{N}{2}+1-\frac{3N-2}{2}M.
\]

\medskip
\noindent\textit{Case E3: $2/(3N-2)<M\leq(N-1)/N$.}\par\nopagebreak[4]

For $2/(3N-2)<M\leq1/N$, \eqref{eq:PAP-even-boundary-differences} shows that $r_{\mathrm{PAP}}(w_L;0)$ strictly exceeds both increasing rates at $\lambda_1=0$. Thus, a sufficiently small feasible increase in $\lambda_1$ decreases the upper envelope. For $M>1/N$, $\lambda_1=0$ is infeasible. We therefore seek a minimizer with $\lambda_1>0$.

In this regime, the rate $r_{\mathrm{PAP}}(w_L;\lambda_1)$ decreases with $\lambda_1$, whereas $r_{\mathrm{PAP}}(w_H;\lambda_1)$ increases. Their unique intersection is
\begin{equation}
\label{eq:PAP-even-balancing-point}
\lambda_{1,\mathrm{bal}}^{\mathrm{(e)}}
\triangleq
\frac{N\bigl((3N-2)M-2\bigr)}{(3N-1)(N-2)}.
\end{equation}
We first verify that this intersection lies in the feasible interval \eqref{eq:PAP-theta-feasible-interval}. The lower bound on $M$ in Case~E3 gives $\lambda_{1,\mathrm{bal}}^{\mathrm{(e)}}>0$, and
\[
\frac{NM}{N-1}-\lambda_{1,\mathrm{bal}}^{\mathrm{(e)}}
=
\frac{2N(N-1-NM)}{(N-1)(3N-1)(N-2)}
\geq0,
\]
so the intersection does not exceed the upper endpoint. If $M\leq1/N$, the lower endpoint in \eqref{eq:PAP-theta-feasible-interval} is $0$ and is therefore smaller than $\lambda_{1,\mathrm{bal}}^{\mathrm{(e)}}$. If $M>1/N$, then the lower endpoint is $(NM-1)/(N-2)$, and
\[
\lambda_{1,\mathrm{bal}}^{\mathrm{(e)}}
-
\frac{NM-1}{N-2}
=
\frac{N-1-NM}{(3N-1)(N-2)}
\geq0.
\]
Thus, \eqref{eq:PAP-even-balancing-point} is feasible throughout Case~E3.

Since $r_{\mathrm{PAP}}(w_L;\lambda_1)$ decreases and $r_{\mathrm{PAP}}(w_H;\lambda_1)$ increases, their common value at the feasible intersection $\lambda_{1,\mathrm{bal}}^{\mathrm{(e)}}$ is a lower bound on the worst-case rate. To show that this lower bound is attained, it remains to verify that the third candidate rate, \(r_{\mathrm{PAP}}(N;\lambda_1)\), does not exceed this common value at the balancing point. Substituting \eqref{eq:PAP-even-balancing-point} into \eqref{eq:PAP-AG-definitions} gives
\[
G(\lambda_{1,\mathrm{bal}}^{\mathrm{(e)}})
=
-\frac{N(N-1-NM)}{(3N-1)(N-2)}
\leq0.
\]
By \eqref{eq:PAP-high-support-affine-s},
\[
\begin{aligned}
&r_{\mathrm{PAP}}(N;\lambda_{1,\mathrm{bal}}^{\mathrm{(e)}})
-r_{\mathrm{PAP}}(w_H;\lambda_{1,\mathrm{bal}}^{\mathrm{(e)}})\\
&\qquad=
(N-w_H)G(\lambda_{1,\mathrm{bal}}^{\mathrm{(e)}})
\leq0.
\end{aligned}
\]
Together with \eqref{eq:PAP-three-candidate-envelope}, this shows that the upper envelope at $\lambda_{1,\mathrm{bal}}^{\mathrm{(e)}}$ equals the common value of the rates for $w_L$ and $w_H$. The preceding lower bound therefore proves that
\[
\lambda_1^\star
=
\lambda_{1,\mathrm{bal}}^{\mathrm{(e)}}
=
\frac{N\bigl((3N-2)M-2\bigr)}{(3N-1)(N-2)}.
\]
Substituting this value into \eqref{eq:PAP-active-memory} gives
\[
\lambda_2^\star
=
\frac{2N\bigl(N-1-NM\bigr)}{(3N-1)(N-2)}.
\]
Finally,
\[
\begin{aligned}
R_{\mathrm{PAP}}(N,M)
&=
r_{\mathrm{PAP}}(w_L;\lambda_1^\star)\\
&=
r_{\mathrm{PAP}}(w_H;\lambda_1^\star)\\
&=
\frac{N(3N+1)-(3N-2)NM}{2(3N-1)}.
\end{aligned}
\]
At $M=(N-1)/N$, the optimal pair is $(\lambda_1^\star,\lambda_2^\star)=(1,0)$ and the rate is $1$, matching the full-coordinate regime.
\end{proof}
\section{Finite-Length Realization}
\label{app:finite-length}

Fix rational $\lambda_1,\lambda_2\geq0$ with $\lambda_1+\lambda_2\leq1$ and $(N-1)\lambda_1+\lambda_2\leq NM$. Choose $L$ so that $\lambda_1 L$, $\lambda_2 L$, $\lambda_1 L/N$, and $\lambda_2 L/N$ are integers. Partition $[L]$ into full, ABO, and uncached sets of the prescribed sizes. Assign exactly $\lambda_1 L/N$ full coordinates and $\lambda_2 L/N$ ABO coordinates to each source. Source $i$ caches the $N-1$ nonlocal symbols at each of its assigned full coordinates and one ABO parity at each of its assigned ABO coordinates. Its cache image has size
\[
2^{L[(N-1)\lambda_1+\lambda_2]/N}.
\]
The cached symbols at a full coordinate are independent, and an ABO parity is a nonzero linear form. Different coordinates involve disjoint file symbols, so the stated image size is exact. The delivery rules in Section~\ref{sec:pap} recover every requested coordinate. Each source's transmitted nonzero linear forms also involve disjoint symbols, so the logarithm of its message image size equals its transmitted bit count. Summing these counts gives~\eqref{PAP-R-alpha}.

For a real feasible pair $(\lambda_1,\lambda_2)$, choose rational sequences $0\leq\lambda_{1,k}\leq\lambda_1$ and $0\leq\lambda_{2,k}\leq\lambda_2$ converging to these parameters. Then $\lambda_{1,k}+\lambda_{2,k}\leq1$ and $(N-1)\lambda_{1,k}+\lambda_{2,k}\leq NM$, so every approximating pair is feasible. The maximum of the finitely many request rates in~\eqref{PAP-R-alpha} is continuous. The rational constructions therefore approach the rate of $(\lambda_1,\lambda_2)$ while respecting the same cache budget, proving asymptotic achievability.

\end{document}